\documentclass[mnsc,nonblindrev]{informs3} 

\OneAndAHalfSpacedXI 


\def \hfillx {\hspace*{-\textwidth} \hfill}

\usepackage{hyperref}
\hypersetup{
  colorlinks=true,
  linkcolor=blue,
  citecolor=blue
}

\usepackage{bbm}
\usepackage{xfrac}
\usepackage[nice]{nicefrac}

\usepackage{natbib}
 \bibpunct[, ]{(}{)}{,}{a}{}{,}%
 %
 %
 %
 %
 %

\TheoremsNumberedThrough     
\ECRepeatTheorems

\EquationsNumberedThrough    


\newcommand{\myparagraph}[1]{\bigskip\noindent\textbf{{#1}}}

\begin{document}


 \RUNAUTHOR{Ashlagi, Kerimov, Tamuz, and Zhao}

\RUNTITLE{The Power of Two in Token Systems}

\TITLE{The Power of Two in Token Systems}

\ARTICLEAUTHORS{%
\AUTHOR{Itai Ashlagi}
\AFF{Department of Management Science and Engineering, Stanford University, Stanford, California 94305, \EMAIL{iashlagi@stanford.edu}} 
\AUTHOR{S\"{u}leyman Kerimov}
\AFF{Jones Graduate School of Business, Rice University, Houston, Texas 77005, \EMAIL{kerimov@rice.edu}}
\AUTHOR{Omer Tamuz}
\AFF{Division of the Humanities and Social Sciences, Caltech, Pasadena, California 91125, \EMAIL{tamuz@caltech.edu}}
\AUTHOR{Geng Zhao}
\AFF{Department of Electrical Engineering and Computer Sciences, University of California, Berkeley, California 94720,  \EMAIL{gengzhao@berkeley.edu}}
} 


\ABSTRACT{%

In economies without monetary transfers, token systems serve as an alternative to sustain cooperation, alleviate free riding, and increase efficiency.  This paper studies whether a token-based economy can be effective in  marketplaces with  thin exogenous supply. We consider a marketplace in which at each time period one agent requests a service, one agent provides the service, and one token (artificial currency) is used to pay for service provision. The number of tokens each agent has represents the difference between the amount of service provisions and service requests by the agent. We are interested in the behavior of this economy when very few agents are available to provide the requested service. Since balancing the number of tokens across agents is key to sustain cooperation, the agent with the minimum amount of tokens is selected to provide service among the available agents. 
When exactly one random agent is available to provide service, we show that the token distribution is unstable. However, already when just two random agents are available to provide service, the token distribution is stable, in the sense that agents' token balance is unlikely to deviate much from their initial endowment, and agents return to their initial endowment in finite expected time. Our results mirror the power of two choices paradigm in load balancing problems. Supported by numerical simulations using kidney exchange data, our  findings suggest that token systems may generate efficient outcomes in kidney exchange marketplaces by sustaining cooperation between hospitals.}

\KEYWORDS{token systems; the power of two choices; kidney exchange}

\maketitle


%


\section{Introduction}

Token systems have been introduced as a market solution to economies in which  monetary transfers are undesirable or repugnant. Examples include trading favors in babysitting cooperatives \citep{sweeney1977monetary}, exchanging resources in peer-to-peer systems \citep{vishnumurthy2003karma}, and distributing food to food banks \citep{prendergast2016allocation}. The use of such artificial currency 
is intended to sustain cooperation, alleviate free riding, and increase efficiency. The incentive for agents to provide service (or resources) is to earn tokens and  the ability to spend them in future exchanges. This ability relies on the  liquidity of agents' availability to provide service upon request. This paper studies the behavior of token systems in thin marketplaces, where demand exceeds supply, and agents' availability for transactions is sparse.

One motivation for this study arises from kidney exchange. Many patients in need for a kidney transplant have a willing but incompatible living donor, which has led to the emergence of platforms that  arrange swaps in order to find a compatible donor \citep{roth2007efficient}.  Efficiency of these platforms relies on the  thickness of the pool, as a large fraction of these pools include pairs that are hard-to-match \citep{ashlagi2019matching}.  However,  free riding is a common behavior; in the US, hospitals often submit to national platforms the pairs they cannot match internally \citep{agarwal2019market},\footnote{\cite{agarwal2019market} documents that more than half of the exchanges in the US being arranged within hospitals.} and similar behavior was observed by countries in international collaborations.\footnote{See  \cite{ashlagi2021kidney} for discussion about collaborations and mergers.}  To alleviate free riding, token systems based on hospitals' contribution to the pool have been adopted in an ad-hoc manner,  most notably by  the National Kidney Registry in the US.\footnote{Numerous platforms count the number of altruistic donors (who have no intended donor) and end altruistic donor chains with patients in these hospitals. Some platforms attempt to equalize the number of donors and patients from each ``player" matched in each exchange \citep{biro2019modelling,mincu2020ip}.} A major concern  is whether such systems can reward back contributing hospitals successfully, given that  liquidity is low  due to the sparsity  of the kidney exchange pool. In this work, we ask if a token system can be effective in such sparse marketplaces and sustain cooperation between agents.

To address this question, we study a stylized model  in which agents   request and provide service over time, and tokens are used as artificial currency to pay for service provision. We consider an infinite horizon model with finitely many agents, where each agent has initially $0$ tokens, and agents are allowed to have negative number of tokens. At each time period, one randomly chosen agent requests a service, and a random subset of agents of size $d$ become  available to provide the service. One of these agents is selected to provide the requested service,  and the service requester pays one token to the service provider. The number of tokens each agent has represents the difference between the amount of service provisions and service requests by the agent. As balancing the number of tokens across agents is key to sustain cooperation, the agent with the fewest tokens  among the available agents is selected to provide the service. 

We are interested in the case in which $d$ is a small constant; in the context of kidney exchange, this aims to capture few match opportunities for a given patient-donor pair.  Our model is  based on \citet{johnson2014analyzing} and \citet{kash2015equilibrium}. Their models, however, assume that at each time period,   a constant fraction of agents are available to provide service, i.e., the service availability is not minimal. Intuitively, the larger the number of agents who are available to provide service, the easier it is to balance the amount of service provisions among agents.

Cooperation is important for token systems. For example, the Capitol Hill Babysitting Co-op, which aimed  to exchange babysitting hours between families  has crashed, since tokens' values depreciated \citep{sweeney1977monetary}. So in a healthy market,  agents should not accumulate too many tokens (which can lead to unraveling), or accumulate a large debt (which leads to free riding). 

Motivated by these potential frictions, we analyze the token distribution in our model, and identify conditions under which the token system is {\it stable}, in the sense that the Markov chain describing the process admits a stationary probability distribution (the formal definition is given  in \S \ref{sec:model}). Informally,  stability is described by two desired conditions. The first condition is a uniform boundedness condition; the  number of tokens each agent has does not deviate much from its initial state, with high probability. The second condition is a fairness condition; the number of tokens each agent has oscillates in such a way that agents return to their initial endowment in finite expected time.  In other words, the first condition ensures that agents will not accumulate or lose too many tokens. Thus, agents will not lose their incentive to cooperate. The second condition implies that the market clears in finite expected time, which alleviates  free riding and balances the number of requests and provisions for each agent continuously over time. In the context of the classical model of mean-variance preferences (e.g., see \cite{markowitz1952portfolio,tobin1958liquidity}), our stability conditions offer an attractive solution for strategic agents, where stable systems reduce agents' risk (for further discussion on strategic considerations, see \S \ref{sec:game}).

\myparagraph{Overview of results.} In the baseline case ($d=1$), only one agent is available to provide service. Thus at each time period, service requester and provider are chosen independently to exchange service for tokens. In this case the system is unstable, since the number of tokens each agent has behaves like a divergent or null recurrent random walk. 

When $d>1$, the service provider who has the minimal number of tokens is chosen among all available agents. In our main finding, we show that already when $d=2$, the system is stable. We show that the long-run probability of agents having more than $M$ tokens (or less than $-M$ tokens) is at most $O(1/M)$. We further show, using Lyapunov arguments, that this  probability is bounded by $O(a^M)$ for any $a>{2}/{3}$; here $O(\cdot)$ includes a constant that may depend on $n$. We further study stability in a large market, and show that as $n$ grows large, this probability is bounded by $(1/2)^M$. The large market analysis offers intuitive insights into the token dynamics.

 We also describe how the stylized model can be applied for kidney exchange in \S \ref{sec:kidneyapplication}, and we further perform numerical experiments to simulate the token distribution of participating hospitals using data from the National Kidney Registry (NKR) platform. The simulations reveal and validate that easy-to-match pairs have, on average, more than one, but very few compatible hard-to-match pairs. Despite this  sparsity, hospitals' tokens do not deviate much from the initial state, which is aligned with our predictions. It is worth noting that hospitals in our data vary significantly with respect to size and distribution of patient-donor pair characteristics. 

\myparagraph{Techniques.} This paper is  inspired and borrows from the  literature on the power of two choices \citep[see, e.g.,][]{mitzenmacher1996power, vvedenskaya1996queueing, azar1999balanced}. The main finding of this literature is  that  in load balancing problems, minimal choice can significantly reduce congestion both in dynamic and static settings. In our model, it is simple to analyze the system directly using a birth-death process when $n=2$. For $n \geq 3$, the system does not seem amenable to a complete analytical solution, but softer techniques allow us to show that it is stable. 

For finite $n$, we use the Lyapunov method to study a suitable exponential Lyapunov function's negative drift to establish tail bounds on the number of tokens agents have in the long-run. We further analyze the token distribution, and we explicitly characterize the tail bound as $n$ grows large using Kurtz's theorem on \textit{density dependent Markov chains} \citep[see][]{kurtz1981approximation}; the same techniques are used in \cite{mitzenmacher1996power} and \cite{vvedenskaya1996queueing} to analyze various load balancing problems including the supermarket model. While there are subtle differences between our model and the load balancing problems in the literature (as we explain below),  our work can be viewed as  another application of the power of two choices paradigm. Kurtz's theorem provides conditions under which a stochastic process can be approximated by a deterministic process in the limit. This allows us to show that the token distribution is balanced, and the number of tokens each agent has is unlikely to be far from its initial state.

\myparagraph{Related literature.} Numerous papers study  exchange economies using tokens or models for exchanging favors \citep{mobius2001trading,friedman2006efficiency,hauser2008trading,abdulkadiroglu2012optimal,kash2012optimizing,kash2015equilibrium,johnson2014analyzing}. This literature is concerned  with whether  cooperation can be sustained in equilibrium, and whether efficiency can be achieved. The main finding of this literature is that if agents are sufficiently patient, token  mechanisms  may lead to efficient outcomes.
 Closely related are  \cite{johnson2014analyzing}, \cite{kash2015equilibrium} and \citet{bo2018}. These papers study token systems as a strategic game in an infinite horizon with discounting, but with similar dynamics. The key difference is that these papers  assume that either all or a constant fraction of agents are available to provide service at each time period.
 \citet{kash2015equilibrium} study a model in which the service provider is chosen  independently from the token distribution, and show the existence of an equilibrium, in which  agents provide service when  their tokens is below some threshold. \citet{johnson2014analyzing} study the same model and show that under the  minimum token selection rule, agents  always provide service in equilibrium when punishments are feasible. 
  \citet{bo2018} extend their findings without using punishments. 
  Our paper contributes to this literature, by studying whether stability can be achieved with low liquidity (i.e., low availability of service).

Also related  is the literature on the power of two choices in load balancing problems \citep{azar1994line,azar1999balanced,mitzenmacher1996power,vvedenskaya1996queueing}. In this classic problem, $n$ balls are sequentially thrown into $n$ bins. The key finding is that if two bins are selected randomly and the ball is thrown to the bin with the lower load, then the fullest bin has exponentially fewer balls than if only one bin is chosen randomly in the throwing process. \citet{mitzenmacher1996power} and \cite{vvedenskaya1996queueing} find a similar result for the supermarket model, which is a dynamic queueing system, where the longest queue is much longer if customers choose randomly among all queues rather than choosing intelligently between two random queues. The state space of the process we are interested in can be obtained by truncating the state space of the supermarket model. We describe in detail the difference between our model and the supermarket model in Remark \ref{differencesupermarket} in \S \ref{bounded}. In short, using a queueing language, \citet{mitzenmacher1996power} focuses on the
	effect of the parameter $d$ on the expected time customers spend in the system and the length of the longest queue in the long-run. We are interested, however, in the overall distribution of all queue-lengths, and  provide a more detailed characterization for the system, which is needed for our analysis.
    In fact, \citet{azar1999balanced} also study an infinite horizon process in which at each time period, one random ball is removed from the bins, and one ball arrives which is assigned to one of the two random bins intelligently. Our stochastic process has subtle differences (using this language, instead of removing one random ball, we first pick a random bin, and then remove a ball), making  the machinery of \citet{azar1999balanced}  inapplicable.

\myparagraph{Notation.} We use $\mathbb{Z}_{\geq 0}$ and $\mathbb{Z}_{+}$ to denote the set of non-negative integers and the set of strictly positive integers, respectively.  We use $\mathbb{R}_{\geq 0}$ and $\mathbb{R}_{+}$ to denote the set of non-negative real numbers and the set of strictly positive real numbers, respectively. We write $\mathbb{E}_{\pi}[\cdot]$ and $\mathbb{P}_{\pi}(\cdot)$ to write the expectation and the probability with respect to a given distribution $\pi$, respectively.

\section{Model}
\label{sec:model}

There is a finite set of agents $\mathcal{A}=\{1, 2, \ldots , n\}$, $n \geq 2$. The time $t \in \mathbb{Z}_{\geq 0}$ is discrete. The number of tokens  agent $i \in \mathcal{A}$ has at time $t$ is denoted by $s_i^t \in \mathbb{Z}$. We assume that $s_i^0 = 0$ for all $i \in \mathcal{A}$.  Let $s^t \in \mathbb{Z}^n$  track the number of tokens agents have at time $t$.

Let $P=(p_i)_{{i \in \mathcal{A}}},Q=(q_i)_{{i \in \mathcal{A}}}$ be full-support probability measures over the set of agents. At each time period, nature picks one agent to become a {\it service requester} according to $P$. Let $d$ be a positive integer, which we call \textit{service availability density}. At each time period, nature picks \textit{available service providers} by selecting $d$ agents according to $Q$ independently and with replacement. Thus, at most $d$ agents are available to provide service at each time period. We say that  agents are \textit{symmetric} if $p_i = q_i = \frac{1}{n}$ for all $i \in \mathcal{A}$.

We refer the tuple $(n, P, Q, d)$ as the token system. We will analyze the behavior of the token system under a natural matching policy called the  \textit{minimum token selection rule} \citep[see, e.g.,][]{johnson2014analyzing}.
This policy, at each time period, selects the available provider with the lowest number of tokens as the {\it service provider} (ties are broken by choosing uniformly at random). At each time $t$, if  agent $i$ is the service requester and agent $j$ is the service provider,  $i$ pays one token to  $j$.   Note that an agent can provide service to herself.\footnote{This technical assumption is useful for the proofs. Moreover, it  is motivated by the kidney exchange setting, where patient-donor pairs from the same hospital can be matched internally in a centralized setting when hospitals merge their patient-donor pools. Our results hold with  minor modifications if an agent cannot serve  herself.} In this case, $s^{t+1}=s^t$. Otherwise, $s^{t+1}_i = s^t_i-1$, $s^{t+1}_j = s^t_j+1$, and $s^{t+1}_k = s^t_k$ for all $k \in \mathcal{A} \backslash  \{i,j\}$.  In either case, we have $\sum_{i \in \mathcal{A}} s_i^t = 0$ for all $t\geq0$. The case $d=1$ is the degenerate case, where the system simply selects one service requester and one service provider independently at random.

\myparagraph{Stability.} Under a token system $(n,P,Q,d)$, the state of amount of tokens $s^t$ evolves according to a Markov chain defined on the state space $\left\{s \in \mathbb{Z}^n\,:\, \sum_{i \in \mathcal{A}} s_i = 0\right\}$. Our assumptions that $P$ and $Q$ are full-support probability measures over $\mathcal{A}$ ensure that this Markov chain is irreducible. Furthermore, since there is a positive probability that the service requester is the service provider herself at each time period, the Markov chain is aperiodic. 

We say that a token system $(n,P,Q,d)$ is {\it stable} if this Markov chain has a stationary probability distribution. The reason we associate the existence of a stationary distribution with stability is the fact that it is equivalent to each of the following conditions:
\begin{itemize}
\item \textbf{(C1)} There is a uniformly small probability that the number of tokens owned or owed by any agent is large. Formally, there is a function $f \colon \mathbb{Z}_{\geq 0} \to [0,1]$ such that $\lim_{M \to \infty} f(M) = 0$, and for all times $t$ large enough and all agents $i \in \mathcal{A}$ it holds that $\mathbb{P}(|s^t_i| > M) < f(M)$. 
\item \textbf{(C2)} The expected time for the token system to clear is finite. Formally, let $T_0$ be the first time the system returns to $0$, i.e., $T_0 = \min\{t > 0\,:\, s^t=0\}$. Then $\mathbb{E}(T_0)$ is finite. Note that by the Markov property, this is also the expected time to return to $0$ after a later visit to $0$. 
\end{itemize}

While we do not explicitly incorporate strategic considerations in our model, we view stability as a necessary condition to sustain cooperation in an appropriately defined strategic game.\footnote{For example, in order to prove the existence of an equilibrium, where all agents play a threshold strategy, \cite{kash2015equilibrium} first determine the token distribution of agents in the long-run.} Since the chain is irreducible and aperiodic, stability implies that the stationary distribution is unique, and that over time the distribution of $s^t$ will converge to it. Conversely, if there is no stationary distribution, the distribution of $s^t$ will become more and more ``spread out'' as $t$ increases, with some agents either owning or owing a large number of tokens. We thus interpret stability as a necessary condition for the prevention of unraveling and free riding in a token system. In \S\ref{sec:game}, we offer a microeconomic justification for stability via mean-variance preferences, which is a simple model of risk aversion used in economics.


The key novel feature of our model is  that the service availability density $d$ is small, rather than being a constant fraction of $n$.  The larger the service availability density $d$ is, the easier it is to achieve stability, as weakly more agents can provide service at any time period, which provides more flexibility to balance service provisions among agents. Therefore unless stated otherwise, we focus on the case when the service availability is minimal, i.e.,  $d=2$.

\subsection{The case $d=1$} \label{subsection:d=1}

Note that when $d=1$, $s_i^t$ is a lazy random walk in one dimension for all $i \in \mathcal{A}$, and so by standard arguments the token system is not stable. Moreover, $s^t$ is a random walk on $\mathbb{Z}^{n-1}$ ($n-1$ dimensions), and hence,  the Markov chain will not be recurrent by P\'olya's Recurrence Theorem for all $n \geq 4$, so that the market will eventually stop clearing at all. 

 An immediate corollary is that the token system is also not stable under the \textit{random tie-breaking selection rule}, where the rule selects the service provider, even if all providers are available  ($d=n$), uniformly at random (see Corollary \ref{randomtiebreaking} and all missing details in Appendix \ref{appendix:section2}). 

\section{Two agents}
\label{sec:2player}
In this section, we analyze the token distribution when there are only 2 agents ($n=2$). Although understanding the token distribution for this case is simple, the analysis will provide insights regarding the token distribution for the general case. Informally, the best concentration around the initial point is achieved when $n=2$; as the number of agents increases, the ``distance" of the token distribution from the initial point increases as well.

\begin{proposition}\label{twoagentsthm}
For $d \geq 2$, the token system with $2$ agents is stable if and only if $q_1^d < p_1$, $q_2^d < p_2$. In this case, let $\pi$ be the the steady-state distribution of the Markov chain $(s^t:t\geq0)$. Then for all $M \in \mathbb{Z}_{+}$, we have
\begin{align*}
  \mathbb{P}_{\pi}(|s_i^t| > M)=  \dfrac{\Bigg( \dfrac{p_2
      q_1}{p_1 - q_1^d} \Bigg(\dfrac{p_2 q_1^d}{p_1(1-q_1^d)} \Bigg)^M
    + \dfrac{p_1 q_2}{p_2 - q_2^d} \Bigg(\dfrac{p_1
      q_2^d}{p_2(1-q_2^d)} \Bigg)^M \Bigg)}{\Bigg( 1 + \dfrac{p_2
      q_1}{p_1 -q_1^d}  + \dfrac{p_1 q_2}{p_2-q_2^d}  \Bigg)},
\end{align*}
for $i=1,2$. Moreover, the expected time between two successive occurrences of the initial state $(0,0)$ is given by $  1 + \dfrac{p_2 q_1}{p_1 -q_1^d}  + \dfrac{p_1 q_2}{p_2-q_2^d}.$ 

\end{proposition}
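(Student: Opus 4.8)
The plan is to exploit the fact that when $n=2$ the identity $s_1^t+s_2^t=0$ collapses the state to a single integer $k:=s_1^t\in\mathbb{Z}$, so that $(s_1^t:t\ge 0)$ is a nearest-neighbor (birth--death) Markov chain on $\mathbb{Z}$; the whole proof then reduces to writing down this chain and solving for its stationary measure explicitly. First I would read off the one-step probabilities. At a state $k\ge 1$ agent $2$ is strictly token-poorer, so the minimum token selection rule makes her the provider whenever she is among the $d$ sampled agents (probability $1-q_1^d$) and makes agent $1$ the provider otherwise; combining with the random choice of requester, the probability of moving to $k+1$ is $p_2 q_1^d$, the probability of moving to $k-1$ is $p_1(1-q_1^d)$, and the rest is holding probability. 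Symmetrically, at every $k\le -1$ the up-probability is $p_2(1-q_2^d)$ and the down-probability is $p_1 q_2^d$. At the single state $k=0$ the sampled agents are tied, so (resolving the tie among the sampled draws) the provider is distributed as an independent $Q$-draw, giving up-probability $p_2 q_1$ and down-probability $p_1 q_2$. The self-service events only inflate the holding probability and play no further role.

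Next, since a nearest-neighbor chain on $\mathbb{Z}$ is automatically reversible, any stationary measure $\pi$ must satisfy the detailed-balance equations $\pi(k)\,p(k,k+1)=\pi(k+1)\,p(k+1,k)$. Solving these outward from $k=0$, and writing $r_+:=\frac{p_2 q_1^d}{p_1(1-q_1^d)}$ and $r_-:=\frac{p_1 q_2^d}{p_2(1-q_2^d)}$, one gets
\[
\pi(k)=\pi(0)\,\frac{p_2 q_1}{p_1(1-q_1^d)}\,r_+^{\,k-1}\ \ (k\ge 1),
\qquad
\pi(-k)=\pi(0)\,\frac{p_1 q_2}{p_2(1-q_2^d)}\,r_-^{\,k-1}\ \ (k\ge 1).
\]
Because the chain is irreducible, it is positive recurrent (equivalently, admits a stationary distribution) exactly when this measure is summable, i.e.\ iff $r_+<1$ and $r_-<1$; using $p_1+p_2=1$ these simplify precisely to $q_1^d<p_1$ and $q_2^d<p_2$. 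When either inequality fails the detailed-balance measure is infinite, and since for an irreducible birth--death chain it is the unique stationary measure up to scaling, no stationary distribution exists---which gives the ``if and only if''.

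It then remains to do the bookkeeping in the stable case. Using $1-r_+=\frac{p_1-q_1^d}{p_1(1-q_1^d)}$ and its $r_-$-analogue, the geometric series telescope to $\sum_{k\ge1}\pi(k)=\pi(0)\frac{p_2 q_1}{p_1-q_1^d}$ and $\sum_{k\ge1}\pi(-k)=\pi(0)\frac{p_1 q_2}{p_2-q_2^d}$, so normalization forces $\pi(0)=\bigl(1+\frac{p_2 q_1}{p_1-q_1^d}+\frac{p_1 q_2}{p_2-q_2^d}\bigr)^{-1}$. By the standard identity for irreducible positive-recurrent chains, the expected return time to the state $(0,0)$ equals $1/\pi(0)$, which is the claimed expression. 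Finally, $\mathbb{P}_\pi(|s_1^t|>M)=\sum_{k>M}\pi(k)+\sum_{k>M}\pi(-k)$, and each of these is again a geometric tail: $\sum_{k>M}\pi(k)=\pi(0)\frac{p_2 q_1}{p_1-q_1^d}r_+^{\,M}$ and similarly on the negative side. Substituting the value of $\pi(0)$ gives the displayed formula for $i=1$; the case $i=2$ is identical because $s_2^t=-s_1^t$.

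All of this is elementary, and I expect the only genuinely delicate point to be the transition probabilities at the tie state $k=0$: one has to be careful that the tie-breaking convention makes the provider an unbiased $Q$-sample there, which is what produces the linear factors $q_1,q_2$---rather than $q_1^d,q_2^d$---in the numerators, and hence the exact constants in the statement. A secondary point worth stating explicitly is the uniqueness (up to scaling) of the stationary measure of an irreducible birth--death chain, which is what lets non-summability rule out stability in the ``only if'' direction.
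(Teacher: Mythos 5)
Your proposal is correct and follows essentially the same route as the paper: both identify the $n=2$ dynamics with a birth--death chain (the paper parametrizes states as $(a,b)$ with $b$ the token-rich agent, which is just your $k\in\mathbb{Z}$ relabeled), write the detailed-balance equations, observe that summability of the resulting geometric measure is equivalent to $q_1^d<p_1$ and $q_2^d<p_2$, and read off the normalization, the tails, and the return time $1/\pi(0)$. Your treatment of the tie state ($p_2q_1$ up, $p_1q_2$ down) agrees with the paper, which obtains the same constants by evaluating $\sum_{i=1}^d \frac{i}{d}\binom{d}{i}q_1^i q_2^{d-i}=q_1$.
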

The proof is straightforward and given in Appendix \ref{appendix:section3}.\footnote{We note that similar results hold even with less amount of service availability in the following sense. Suppose that at each time period, at most $2$ agents are available to provide service with probability $0<\beta<1$, and only one agent is available to provide service with probability $1-\beta$, independently. The analysis of this system is very similar and given in Appendix \ref{appendix:section3}. This suggests that even with having a weaker power of tie-breaking, two agents can still trade favors while keeping the token system stable.} One implication of Proposition \ref{twoagentsthm} is that the probability of owning or owing a large number of tokens decays exponentially, i.e.,   $f(M) = O(a^M)$, where the constant $a \in (0,1)$ can be found following Remark \ref{constantaremark} in Appendix \ref{appendix:section3}.

Proposition \ref{twoagentsthm} identifies the level of asymmetry that can be tolerated  between service request and service provision (within and across agents) rates. Moving forward, we focus on the symmetric case.  Note that for symmetric agents and $d\geq 2$, Proposition~\ref{twoagentsthm} implies that the system is stable.  When furthermore $d=2$, we get that
\begin{align*}
  \mathbb{P}_{\pi}(|s_i^t| > M)=  \frac{2}{3}\left(\frac{1}{3}\right)^M,
\end{align*}
and that $\mathbb{E}(T_0)=3$. In the general case, we will argue that $a=1/3$ is the best rate one can hope for.

\section{The general case}
\label{sec:general}

We analyze here the symmetric case with any number of agents.  The results for the general case can be summarized as follows:
\begin{enumerate}

\item In Section \ref{recurrence}, we show that stability holds for all $n \geq 2$ and $d \geq 2$. We further show that (C1) holds with $f(M)= 5/M$ for all $n \geq 2$ (Theorem \ref{thm2}). In particular, the probability that a given agent has a large number of tokens decays to zero in a rate that does not depend on the number of agents. 

\item In Section \ref{exponential_tail}, we show that the tail bound in (C1) can be strengthened with $f(M) = O((\frac{2}{3}+\delta)^M)$ for any $\delta > 0$, where $O(\cdot)$ hides an implicit constant that may depend on $n$ (Theorem \ref{expbound}). The proof relies on  Lyapunov arguments and is given in Appendix \ref{proofexpbound}. 

\item In Section \ref{bounded}, we show that as $n$ grows large,  (C1) holds with $f(M)=(1/2)^M$ (Theorem \ref{infiniteconstant}). The proof organization for Theorem \ref{infiniteconstant} is as follows. We first present \textit{density dependent Markov chains} and Kurtz's theorem. We then model our system as a density dependent Markov chain, and use Kurtz's theorem to characterize our system in the limit via a system of ordinary differential equations. Finally, we study the solution of this system to prove Theorem \ref{infiniteconstant}.

 \item We  conjecture that  (C1) holds for any $n \geq 2$ with $f(M)=a^M$, where $a \in [1/3,1/2]$ (Conjecture \ref{conjecturemonotinicity}).
\end{enumerate}

\subsection{Stability}\label{recurrence}

In general, it seems difficult to determine whether a given system $(n,P,Q,d)$ is stable. Indeed, the results of the previous section show that already for $n=2$, this can be highly sensitive to the precise values of $P$ and $Q$. The next result shows that in the symmetric case, stability is achieved for any $n \geq 2$, assuming $d \geq 2$.
\begin{theorem}\label{thm2}
The token system is stable for any $d\geq2$ when the agents are symmetric. Furthermore, (C1) holds with $f(M) = 5/M$.
\end{theorem}

The proof of Theorem \ref{thm2} is deferred to Appendix \ref{appendixthm1proof}. By analyzing the expected one-step transition difference of the potential function $\sum_{i=1}^n \mathbb{E}[(s_i^t)^2]$, we show that the Markov chain $(s^t:t\geq 0)$ has a stationary distribution, and $\mathbb{E}[|s_i^t|] \leq 5$ for all $i \in \mathcal{A}$ and for all $t$ large enough by exploiting a recursion on the expected one-step transition difference of the potential function. Theorem \ref{thm2} then follows from Markov's inequality.

\endproof

\subsection{Exponential decay}\label{exponential_tail}

We next offer a refined result, showing that under symmetry conditions, the tail bound on token profiles decays exponentially. This improves upon the polynomial bound previously established in Theorem~\ref{thm2}.

\begin{theorem}\label{expbound}
    For the token system with $d\ge 2$ and symmetric agents, (C1) holds with $f(M) = O(a^M)$ for any $a > 2/3$, where $O(\cdot)$ hides implicit constant factors possibly depending on $n$.
\end{theorem}

Unlike in Theorem~\ref{thm2}, the constant in the bound is implicit and may depend on the size $n$ of the system. Since we are unable to explicitly characterize the implicit constant, it is not clear under which regimes the exponential bound dominates the polynomial bound in Theorem~\ref{thm2} and vice versa. The proof of Theorem \ref{expbound} requires the following key observation.  For notational convenience, we write 
    $Z_\lambda^t := \sum_{i=1}^n e^{\lambda s_i^t}$
and 
    $E_\lambda^t := \mathbb{E}[Z_\lambda^t]$.

\begin{lemma}\label{lemma1:lyapunov}
    For $\lambda > 0$, if both $\limsup_{t\to\infty} E_\lambda^t < \infty$ and $\limsup_{t\to\infty} E_{-\lambda}^t < \infty$, then (C1) holds with $f(M)=O(e^{-\lambda M})$. 
\end{lemma}
\proof{Proof of Lemma~\ref{lemma1:lyapunov}.}
Note that whenever $|s_j^t| > M$ for some $j \in \mathcal{A}$, we must have $Z_\lambda^t + Z_{-\lambda}^t = \sum_{i=1}^n e^{\lambda|s_i^t|} + e^{-\lambda|s_i^t|} > e^{\lambda M}$. By Markov's inequality, 
\begin{equation*}
    \mathbb P(|s_i^t| > M) \le \mathbb P\Big(\sum_{i=1}^n e^{\lambda|s_i^t|} + e^{-\lambda|s_i^t|} > e^{\lambda M}\Big) \le e^{-\lambda M} (E_\lambda^t + E_{-\lambda}^t).
\end{equation*}
The quantity $E_\lambda^t + E_{-\lambda}^t$ is uniformly bounded for all $t$ sufficiently large, finishing our proof.
\Halmos\endproof

Lemma~\ref{lemma1:lyapunov} prepares the ground for us to apply the Lyapunov method with $E_\lambda^t$ as a family of potential functions; it will suffice to establish the asymptotic finiteness of $E_\lambda^t$ for $\lambda$ in the interval $(-\log 1.5, \log 1.5)$, where $\log$ is the natural logarithm. Our proof will proceed in two steps. First, we show that $\limsup_{t\to\infty} E_\lambda^t < \infty$ for $\lambda$ sufficiently close to zero (possibly depending on $n$). Then we remove this dependency and widen the range of $\lambda$ to cover the desired interval using a fixed point argument, exploiting the log-convexity of the mapping $\lambda \mapsto E_\lambda^t$. The proof of Theorem \ref{expbound} is deferred to Appendix \ref{proofexpbound}.

\subsection{Stability in the large}\label{bounded}

In this section, we study the large market setting to offer intuition regarding the dynamics of the market by focusing on the densities of agents with a given number of tokens. What we establish next suggests that stability is  ``well-behaved" in the large; as $n$ grows large, (C1) holds with $f(M)=(\frac{1}{2})^M$.

Let $\pi$ be the steady-state distribution of the Markov chain $(s^t:t \geq 0)$ granted by Theorem~\ref{thm2}. For any $M \in \mathbb{Z}_+$, define $p_{n,M}:=\mathbb{P}_{\pi}(|s_1^t| \leq M)$ when there are $n$ symmetric agents. 

\begin{theorem}\label{infiniteconstant}
 $\lim_{n \rightarrow \infty} p_{n,M} \geq 1 - (1/2)^M$ for all $M \in \mathbb{Z}_{+}$.
\end{theorem}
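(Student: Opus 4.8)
The plan is to follow the route sketched right after the statement: represent the token process (in the focal case $d=2$, with $n$ symmetric agents) as a density dependent Markov chain, pass to a mean-field ODE limit via Kurtz's theorem, identify the fixed point of the limiting system, and read the bound off its explicit form. The natural coordinates are the tail fractions $m_k^{(n)}:=\frac{1}{n}\,|\{i\in\mathcal{A}:s_i^t\ge k\}|$, $k\in\mathbb{Z}$. In one step the uniform requester sitting at level $r$ lowers $m_r^{(n)}$ by $1/n$, and the provider --- the minimum of two i.i.d.\ uniform agents, hence at a level $\ge p$ with probability $(m_p^{(n)})^2$ --- sitting at level $p$ raises $m_{p+1}^{(n)}$ by $1/n$; the event that requester and provider coincide has probability $O(1/n)$ and contributes nothing in the limit. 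After speeding up time by a factor $n$, Kurtz's theorem identifies the process, on any fixed finite horizon, with the solution of
\[
\dot m_k \;=\; (m_{k-1}^2-m_k^2)\;-\;(m_k-m_{k+1}),\qquad k\in\mathbb{Z},
\]
started from $m_k(0)=1$ for $k\le 0$ and $m_k(0)=0$ for $k\ge 1$. Summing this identity from $k=\ell$ to $+\infty$ (the series converge since $m_k\to 0$) and telescoping shows that any fixed point $m^\ast$ obeys $m_\ell^\ast=(m_{\ell-1}^\ast)^2$ for every $\ell\in\mathbb{Z}$; together with monotonicity in $k$, the boundary behaviour $m_k^\ast\to 1$ as $k\to-\infty$ and $m_k^\ast\to 0$ as $k\to+\infty$, and mean zero (which the flow preserves), this fixed point is unique. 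To move from this finite-horizon statement to the stationary law $\pi_n$, I would invoke tightness of $\{\pi_n\}_n$ --- Theorem~\ref{thm2} gives $\mathbb{E}_{\pi_n}[|s_1^t|]\le 5$ uniformly in $n$ --- so that every weak limit point of the induced laws on empirical measures is invariant under the deterministic limit flow and hence, the flow having the unique globally attracting fixed point $m^\ast$, equals the point mass at $m^\ast$; consequently $\mathbb{P}_{\pi_n}(s_1^t\ge k)\to m_k^\ast$ for each $k$, so $\lim_n p_{n,M}=m_{-M}^\ast-m_{M+1}^\ast$ exists.

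It remains to analyze $m^\ast$. Writing $\beta:=m_1^\ast\in(0,1)$, the recursion $m_\ell^\ast=(m_{\ell-1}^\ast)^2$ gives $m_k^\ast=\beta^{2^{k-1}}$ for all $k\in\mathbb{Z}$ (in particular $m_0^\ast=\beta^{1/2}$ and $m_{-j}^\ast=\beta^{2^{-j-1}}$), so the mean-zero condition $\sum_{k\ge 1}m_k^\ast=\sum_{k\le 0}(1-m_k^\ast)$ reads
\[
\sum_{i\ge 0}\beta^{2^i}\;=\;\sum_{i\ge 1}(1-\beta^{2^{-i}}).
\]
The two facts I would extract are $\beta<1/2$ and $\beta\ge 1/e$. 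If $\beta\ge 1/2$, the left side is at least $\beta+\beta^2\ge 3/4$, whereas the estimate $1-\beta^{2^{-i}}\le 2^{-i}\ln(1/\beta)$ forces the right side to be at most $\ln(1/\beta)\le\ln 2<3/4$ --- a contradiction. If $\beta<1/e$, then (using $2^i\ge i+1$) the left side is at most $\sum_{i\ge 0}\beta^{i+1}=\beta/(1-\beta)<1/(e-1)<0.582$, while the right side is at least $(1-\beta^{1/2})+(1-\beta^{1/4})>(1-e^{-1/2})+(1-e^{-1/4})>0.614$ --- again a contradiction. Granting $1/e\le\beta<1/2$, I bound the two tails for $M\ge 1$: the upper tail is $m_{M+1}^\ast=\beta^{2^M}<(1/2)^{2^M}\le 2^{-(M+1)}$ because $2^M\ge M+1$, and the lower tail is $1-m_{-M}^\ast=1-\beta^{2^{-M-1}}\le 2^{-M-1}\ln(1/\beta)\le 2^{-(M+1)}$ because $\ln(1/\beta)\le 1$. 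Adding, $\mathbb{P}_{m^\ast}(|s_1^t|>M)<2^{-M}$, hence $\lim_n p_{n,M}=1-\mathbb{P}_{m^\ast}(|s_1^t|>M)>1-(1/2)^M$, which is the claim. (For any $d\ge 2$ the argument is the same with the fixed-point recursion $m_\ell^\ast=(m_{\ell-1}^\ast)^d$, and the bound only improves.)

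The elementary estimates above are not the difficulty; the difficulty is the mean-field limit itself, in two respects. First, the state space is genuinely infinite-dimensional ($k$ runs over all of $\mathbb{Z}$), so Kurtz's theorem --- stated for finite-dimensional density dependent chains --- does not apply verbatim; I would either truncate the token range to $[-N,N]$, apply Kurtz to the truncated chain, and let $N\to\infty$ using that $\mathbb{E}_{\pi_n}[|s_1^t|]\le 5$ makes the mass outside $[-N,N]$ uniformly $O(1/N)$, or work directly in a weighted sequence space as in the analysis of the supermarket model. Second, and more delicate, is the interchange of the limits $n\to\infty$ and $t\to\infty$: Kurtz controls the process only on compact time intervals, so concluding that the \emph{stationary} law converges to $m^\ast$ requires both tightness of $\{\pi_n\}_n$ and global asymptotic stability of the limiting ODE (so that the limit flow admits no invariant measure other than the point mass at $m^\ast$). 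Proving that this ODE system has a unique, globally attracting fixed point is the analogue of the corresponding nontrivial fact for the supermarket model, and I expect it to be the principal obstacle.
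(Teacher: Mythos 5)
Your proposal is correct and follows the same overall route as the paper: represent the system as a density dependent Markov chain, invoke Kurtz's theorem to obtain the limiting ODE system $\dot z_i = (z_{i-1}^2 - z_i^2) - (z_i - z_{i+1})$, telescope at the fixed point to get $\pi_{i+1}=\pi_i^2$ together with the mean-zero constraint, and then bound the two tails. Where you genuinely depart from the paper is in the endgame. The paper's localization lemma only establishes $1/2<\pi_0<3/4$, which is too weak for a term-by-term tail estimate (already at $M=1$ it allows $\pi_0^{4}\approx 0.316$ alone to exceed $1/4$); the paper therefore proves $\pi_0^{2^{-M}}-\pi_0^{2^{M+1}}\ge 1-2^{-M}$ via a monotonicity argument on $g(M)=\pi_0^{2^{-M}}-\pi_0^{2^{M+1}}-1+2^{-M}$, i.e.\ a derivative computation for $M\ge 6$ plus finite case-checking. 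You instead prove the sharper localization $1/e\le \pi_1<1/2$ (equivalently $e^{-1/2}\le\pi_0<2^{-1/2}$, which sits inside the paper's interval and is consistent with the numerical value $\pi_0\approx 0.667$), using $1-x\le\ln(1/x)$ on one side and $2^i\ge i+1$ on the other; this tighter interval lets you bound each tail separately by $2^{-(M+1)}$ with no case analysis. I checked your estimates and they are correct; this is arguably a cleaner derivation of the same bound. On the hard part --- the mean-field limit itself --- you and the paper are in the same position: the paper relies on the Mitzenmacher extension of Kurtz's theorem to countably many dimensions and verifies the Lipschitz constant $2+2d$, but, like you, it does not justify the interchange of $n\to\infty$ with $t\to\infty$ (it simply asserts that the long-run probability equals $\pi_{-M}-\pi_{M+1}$), nor does it prove global attractivity or uniqueness of invariant measures for the limiting flow. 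You are right to flag this as the principal obstacle; just be aware that the paper does not resolve it either.
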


We will use Kurtz's theorem on \textit{density dependent Markov chains}, which allows us to analyze the stochastic process as a deterministic process in the limit. This analysis helps us to characterize the steady-state distribution of the system as the number of agents grows large and thus, proving Theorem \ref{infiniteconstant}.

\myparagraph{Density dependent Markov chains.} 
We begin with the definition of density dependent Markov chains, which is given in \citet{kurtz1981approximation} for finite dimensional systems and extended by \cite{mitzenmacher1996power} to  countably infinite dimensional systems. Let $\mathbb{Z}^{*}$ be either $\mathbb{Z}^m$ for some finite dimension $m$, or $\mathbb{Z}^\mathbb{N}$, and similarly define $\mathbb{R}^{*}$. Let $L \subseteq \mathbb{Z}^*$ be the set of possible non-zero transitions of the system. For each $\vec{l} \in L$, define a nonnegative function $\beta_{\vec{l}} \colon \mathbb{R}^{*} \rightarrow [0,1]$.
\begin{definition}\label{definition:kurtz}
A sequence (indexed by $n$) of continuous time Markov chains $(X_n(t): t \geq 0)$ on the state spaces $S_n = \left\{ \vec{k} / n : \vec{k} \in \mathbb{Z}^{*} \right\}$ is a density dependent Markov chain if there exists a $\beta_{\vec{l}} \colon \mathbb{R}^{*} \rightarrow [0,1]$ such that for all $n$ the transition rate of $X_n$ is given by
$q_{x,y}^{(n)} = n \beta_{n(y-x)} (x)$, $x,y \in S_n.$
\end{definition}

In Definition \ref{definition:kurtz}, the index $n$ can be interpreted as the total population or volume of the system, and the components of $\vec{k}/n$ can be interpreted as the densities of different types present in the system. The $\beta_{\vec{l}}\;(x)$ can be interpreted as the probability of transition $\vec{l}$ from $x \in S_n$ to $y \in S_n$, where $nx + \vec{l} = ny$. Given a density dependent Markov chain $X_n$ with transition rates $q_{{x,y}}^{(n)} = q_{{\vec{k},\vec{k}+\vec{l}}}^{(n)} = n \beta_{\vec{l}} \;(\vec{k}/n)$, define  $F(x) = \sum_{\vec{l} \in L} \vec{l} \beta_{\vec{l}} \;(x)$. The following theorem is key in our analysis: 

\begin{theorem}[Kurtz's theorem \citep{kurtz1981approximation,mitzenmacher1996power}]\label{kurtztheoremm}\label{kurtzzthm}
Suppose we have a density dependent Markov chain $X_n$ (of possibly countably infinite dimension) satisfying the Lipschitz condition $|F(x) - F(y)| \leq M |x-y|$
for some constant $M$. Further suppose $\lim_{n \rightarrow \infty} X_n(0) = x_0$, and let $X$ be the deterministic process:
\begin{equation}\label{eq:deterministic}
X(t) = x_0 + \int_0^t F(X(u))du, \; t \geq 0.
\end{equation}
Consider the path $\left\{X(u) : u \leq T \right\}$ for some fixed $T \geq 0$, and assume that there exists a neighborhood $K$ around this path satisfying
\begin{equation}\label{eq:jump}
\sum_{\vec{l} \in L} |\vec{l}| \sup_{x \in K} \beta_{\vec{l}}\; (x) < \infty.
\end{equation}
Then
$ \lim_{n \rightarrow \infty} \sup_{u \leq T} |X_n(u) - X(u)| = 0 \; \text{almost surely.}$
\end{theorem}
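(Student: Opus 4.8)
The plan is to prove the strong (almost sure, uniform on $[0,T]$) fluid limit by the classical random-time-change representation together with Gronwall's inequality, controlling the stochastic fluctuations through the jump-summability hypothesis \eqref{eq:jump}. First I would realize $X_n$ using a family of independent unit-rate Poisson processes $\{N_{\vec l}\}_{\vec l \in L}$, one per transition type, writing
$$X_n(t) = X_n(0) + \sum_{\vec l \in L} \frac{\vec l}{n}\, N_{\vec l}\!\left( n \int_0^t \beta_{\vec l}(X_n(s))\,ds\right),$$
which is the integral form of the stated rates $q^{(n)}_{x,y} = n\beta_{n(y-x)}(x)$: a jump of size $\vec l/n$ occurs at rate $n\beta_{\vec l}(x)$ from state $x$. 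Compensating each Poisson process, $N_{\vec l}(u) = u + \widetilde N_{\vec l}(u)$ with $\widetilde N_{\vec l}$ a martingale, and recognizing $\sum_{\vec l}\vec l\,\beta_{\vec l}(x) = F(x)$, this becomes the drift--noise decomposition
$$X_n(t) = X_n(0) + \int_0^t F(X_n(s))\,ds + M_n(t), \qquad M_n(t) := \sum_{\vec l}\frac{\vec l}{n}\,\widetilde N_{\vec l}\!\left(n\int_0^t \beta_{\vec l}(X_n(s))\,ds\right).$$

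Subtracting the deterministic equation \eqref{eq:deterministic} and using the Lipschitz bound $|F(x)-F(y)| \le M|x-y|$, I would obtain
$$|X_n(t)-X(t)| \le |X_n(0)-x_0| + M\int_0^t |X_n(s)-X(s)|\,ds + \sup_{s\le T}|M_n(s)|.$$
Writing $\epsilon_n := |X_n(0)-x_0| + \sup_{s\le T}|M_n(s)|$ and applying Gronwall's inequality to $t\mapsto \sup_{s\le t}|X_n(s)-X(s)|$ yields the master estimate $\sup_{t\le T}|X_n(t)-X(t)| \le \epsilon_n\, e^{MT}$. Since $X_n(0)\to x_0$ by hypothesis, the whole problem reduces to showing $\sup_{s\le T}|M_n(s)|\to 0$ almost surely. (The global Lipschitz hypothesis also guarantees, via Picard--Lindel\"of, that the limit path $X$ in \eqref{eq:deterministic} exists and is unique, so the right-hand side is well defined.)

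To control the noise I would localize with the exit time $\tau_n := \inf\{t\ge 0: X_n(t)\notin K\}$, where $K$ is the neighborhood of the path in the hypothesis. On $[0,\tau_n\wedge T]$ we have $\beta_{\vec l}(X_n(s)) \le \sup_{x\in K}\beta_{\vec l}(x) =: \bar\beta_{\vec l}$, so the coordinate-summed predictable quadratic variation of the martingale is
$$\sum_j \big\langle M_n^{(j)}\big\rangle_{\tau_n\wedge T} = \frac{1}{n}\sum_{\vec l}|\vec l|^2\int_0^{\tau_n\wedge T}\beta_{\vec l}(X_n(s))\,ds \le \frac{T}{n}\sum_{\vec l}|\vec l|^2\,\bar\beta_{\vec l}.$$
Because the transitions $\vec l$ are fixed lattice vectors of bounded norm, $\sum_{\vec l}|\vec l|^2\bar\beta_{\vec l}$ is comparable to $\sum_{\vec l}|\vec l|\,\bar\beta_{\vec l}$, which is finite by \eqref{eq:jump}; this is precisely the role of that hypothesis (it also makes $F$ bounded on $K$ and rules out explosion in the infinite-dimensional case). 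Doob's $L^2$ maximal inequality then gives $\mathbb{E}\big[\sup_{t\le\tau_n\wedge T}|M_n(t)|^2\big] = O(1/n)$. A standard bootstrap closes the localization: on the event that $\sup_{t\le\tau_n\wedge T}|M_n|$ is small, the master estimate keeps $X_n$ inside the tube $K$ for all $t\le T$, forcing $\tau_n>T$, so the stopping is vacuous with high probability and the bound transfers to $\sup_{t\le T}|M_n(t)|$.

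The main obstacle is twofold and lives entirely in this last noise-control step. First, the state space is countably infinite-dimensional, so $M_n$ is an infinite sum of time-changed compensated Poisson processes; the summability condition \eqref{eq:jump} is exactly what makes this sum converge in $L^2$ uniformly over $K$ and lets Doob's inequality be applied term-by-term and then summed. Second, the $O(1/n)$ second-moment bound only delivers convergence in probability (and almost sure convergence along subsequences), whereas the theorem asserts full almost sure convergence. To upgrade, I would replace Doob's $L^2$ inequality by an exponential (Bernstein/Freedman-type) martingale inequality, which is available because compensated Poisson processes have all exponential moments; this yields a tail bound of the form $\mathbb{P}\big(\sup_{t\le T}|M_n(t)|>\delta,\ \tau_n>T\big) \le e^{-c(\delta)\,n}$, summable in $n$, whereupon Borel--Cantelli gives $\sup_{t\le T}|M_n(t)|\to 0$ almost surely and hence, through the master estimate, the claimed almost sure uniform convergence $\sup_{u\le T}|X_n(u)-X(u)|\to 0$.
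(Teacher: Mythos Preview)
The paper does not prove this theorem: it is quoted as a known result from \cite{kurtz1981approximation} and its infinite-dimensional extension in \cite{mitzenmacher1996power}, and is used as a black box to derive the ODE system \eqref{eq:diffeq1}. So there is no ``paper's own proof'' to compare against.

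That said, your sketch is essentially the standard Kurtz argument (Poisson random-time-change representation, compensator/drift decomposition, Gronwall), and would work. Two small remarks. First, your passage from $\sum_{\vec l}|\vec l|^2\bar\beta_{\vec l}$ to $\sum_{\vec l}|\vec l|\bar\beta_{\vec l}$ assumes the jump vectors have uniformly bounded norm; that is true in this paper's application ($\vec l = e_{ij}$ has two nonzero entries) but is not part of the stated hypotheses, so in a general proof you would need to assume it or argue differently. Second, the detour through an exponential martingale inequality plus Borel--Cantelli is unnecessary: the classical route obtains almost-sure convergence directly from the functional strong law of large numbers for each Poisson process, $\sup_{u\le v}|N_{\vec l}(nu)/n - u|\to 0$ a.s., and then passes the limit through the sum $\sum_{\vec l}|\vec l|\cdot\frac1n\sup_{u\le nT\bar\beta_{\vec l}}|\widetilde N_{\vec l}(u)|$ using the summability hypothesis \eqref{eq:jump} as a dominating bound. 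This is both simpler and avoids the second-moment issue you flagged.
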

The Lipschitz condition ensures the uniqueness of the solution for the differential equation $\dot{X} = F(X)$, which follows by taking the derivative of (\ref{eq:deterministic}) with respect to $t$.\footnote{\label{footnote1}For example, see Lemma 4.1.6 in \cite{abraham2012manifolds}.} Condition (\ref{eq:jump}) ensures that the jump rate is bounded in the process. Kurtz's theorem implies that as $n \rightarrow \infty$, the behavior of a density dependent Markov chain can be characterized by the deterministic process given in (\ref{eq:deterministic}), where the convergence holds on a finite time interval $[0,T]$ for an arbitrary $T$. We next  model and study   our  system as a density dependent Markov chain, which we  refer to as the \textit{finite model}. 

\myparagraph{The finite and infinite models.} Let us model the system with $n$ symmetric agents as a density dependent Markov chain and denote it by $(X_n(t), t \geq 0)$. Note that in Definition \ref{definition:kurtz}, the $\beta_{\vec{l}}$'s are independent of $n$, and the transition rates are linear in $n$. In order to fit our system to this definition, we  assume that each agent has an exponential clock with rate $1$. The ticking of agent $i$'s clock corresponds to a service request by $i$, and the service provider is selected immediately using the minimum token selection rule. Note that because of the memoryless property and the continuity of the distribution that governs the clocks, agents request service uniformly, and exactly one agent requests service at a time. As a slight abuse of notation, let $n_i (t)$ be the number of agents with $i$ tokens at time $t$, $m_i (t)$ be the number of agents with at least $i$ tokens at time $t$, and $z_i (t) := m_i(t) / n$ be the fraction of agents with at least $i$ tokens at time $t$. Let $\vec{z}(t) = (..., z_{-2}(t), z_{-1}(t), z_0 (t), z_1(t), z_2(t), ...)$, and we drop the time index $t$ when the meaning is clear. We represent the state of $X_n$ by $\vec{z} = \vec{k}/n \in \mathbb{Z}^\mathbb{N}/n$. Let us call this process the \textit{finite model}. Note that the initial state of $X_n$ is $\vec{z}(0) = (..., 1, 1, 1, 0, 0, ...)$, where $z_i(0) = 1$ for all $i \leq 0$, and $z_i(0) = 0$ for all $i \geq 1$.

Next, we describe the transition probabilities $\beta_{\vec{l}}$'s. The set of possible non-zero transitions from $\vec{k}=n\vec{z}$ is $L = \left\{ e_{ij} : i,j \in \mathbb{Z}, i \neq j \right\}$, where $e_{ij}$ is an infinite dimensional vector of all zeros except the $i$'th index (which corresponds to the index of $z_i$) is $-1$ and the $j$'th index (which corresponds to the index of $z_j$) is $1$. Note that after transition $e_{ij}$ occurs, $n z_i$ decreases by $1$ and $nz_j$ increases by $1$ simultaneously. Hence, the transition $e_{ij}$ corresponds to the event when an agent with $i$ many tokens requests service and an agent with $j-1$ many tokens provides service. Since the probability that the service requester has $i$ many tokens is $z_{i} - z_{i+1}$, and the probability that the service provider has $j$ many tokens is $z_{j}^d - z_{j+1}^d$, we have $\beta_{e_{ij}}(\vec{z}) = (z_i - z_{i+1})(z_{j-1}^d - z_j^d)$.\footnote{The probability that all available agents have at least $j$ many tokens is $z_j^d$, and we subtract the probability that all available agents have at least $j+1$ many tokens, which is $z_{j+1}^d$.}
Denote the \textit{infinite model} by $X$, which is the limit of the finite model $X_n$, i.e., $X = \lim_{n \rightarrow \infty} X_n$. Since $X$ is characterized by the deterministic process (\ref{eq:deterministic}), we need to analyze the components of $F(x)$. Note that the $i$'th component of $F(x) = \sum_{\vec{l} \in L} \vec{l} \beta_{\vec{l}}\; (x)$ (which corresponds to $z_i$) is $\sum_{j \in \mathcal{A} \backslash \{i\}} (z_j - z_{j+1})(z_{i-1}^d - z_i^d) - \sum_{j \in \mathcal{A} \backslash \{i\}} (z_i - z_{i+1})(z_{j-1}^d - z_j^d)$, which simplifies to
\begin{equation}
(1 - z_i + z_{i+1})(z_{i-1}^d - z_i^d) - (1-z_{i-1}^d+z_i^d)(z_{i} - z_{i+1}) = (z_{i-1}^d - z_i^d) - (z_i - z_{i+1}).\label{eq:diffeq}
\end{equation}

\begin{remark}\label{differencesupermarket}
In the well-known supermarket model, customers arrive according to a Poisson process with rate $\lambda n$, $\lambda < 1$, where there are $n$ servers that serve according to the FIFO (first in, first out) rule. An arriving customer considers only a constant number ($d$) of servers independently and uniformly at random from the $n$ servers with replacement, and she joins the queue that contains fewest customers (any ties are broken arbitrarily). The service time for each customer is distributed exponentially with rate 1. Using this language, our model corresponds to this dynamic system, where the total number of customers is fixed throughout the process, which restricts the state space of the underlying process drastically.  Our system belongs to the family of dynamic systems referred as the closed model in \cite{mitzenmacher1996power}. Our technical contributions here are twofold. First, the main focus of this literature is on the expected longest-queue length and its dependence on the service availability parameter ($d$), whereas we refine this goal and further study tail-bounds on queue-lengths. Second, as stated in \cite{mitzenmacher1996power}, the steady-state probability that a certain queue has more than $M$ customers can be found by solving certain constraints that are derived from Theorem \ref{kurtztheoremm}, and in general, solution to these constraints does not appear to have a closed form. As we argue next, we will be able to bound the steady-state probabilities without obtaining a closed form solution.   
\end{remark}

\myparagraph{Towards the proof of Theorem \ref{infiniteconstant}.}
Now that we have represented our system using the finite model, we are ready to prove Theorem \ref{infiniteconstant}. The proof is organized as follows. We first show that the conditions of Theorem \ref{kurtzzthm} hold. Then using Theorem \ref{kurtzzthm}, we obtain the system of ordinary differential equations that characterize the infinite model. This characterization lets us  represent the probability of interest in (C1) as $n$ grows large using $\pi_0$ (the fraction of agents that have at least 0 tokens in the long-run). Finally, we find lower and upper bounds for $\pi_0$ to conclude.

Condition (\ref{eq:jump}) is clearly satisfied since the magnitude of any jump is bounded, and the jump rate is bounded above by $1$ for any state. We also show in Appendix \ref{appendix:section4} that the Lipschitz condition of Theorem \ref{kurtztheoremm} holds with $M=2+2d$. By differentiating (\ref{eq:deterministic}) with respect to $t$ and using (\ref{eq:diffeq}), we get the following system of ordinary differential equations that characterizes the infinite model:
\begin{equation}\label{eq:diffeq1}
\frac{dz_i}{dt} = (z_{i-1}^d - z_i^d) - (z_i - z_{i+1}) \;\; \text{for all} \;\;  i \in \mathbb{Z}.
\end{equation}
Intuitively, (\ref{eq:diffeq1}) can be interpreted as follows. Let us consider the expected change in $m_i$ (the number of agents with at least $i$ many tokens) over a small time interval $dt$. First note that a transition occurs whenever one of the agent's exponential clock ticks, which happens with rate $ndt$. Under such transition, $m_i$ increases by $1$ if an agent with $i-1$ many tokens is selected as the service provider, which happens with probability $z_{i-1}^d - z_i^d$. $m_i$ decreases by $1$ if an agent with $i$ many tokens is selected as the service requester, which happens with probability $z_i - z_{i+1}$. Hence, the expected increase in $m_i$ is $(z_{i-1}^d - z_i^d)ndt$, and the expected decrease in $m_i$ is $(z_i - z_{i+1})ndt$, which gives $dm_i = (z_{i-1}^d - z_i^d)ndt - (z_i - z_{i+1})ndt$, and since $m_i / n = z_i$, dividing both sides  by $ndt$ gives (\ref{eq:diffeq1}).

Define an \textit{equilibrium point}, which is a point  $\vec{a}$ such that if $\vec{z}(t') = \vec{a}$, then $\vec{z}(t) = \vec{a}$ for all $t \geq t'$.
Denote the equilibrium point of the infinite model by $\vec{\pi}$, and assume $d=2$ for simplicity from now on (the following arguments can be easily generalized for $d>2$). Clearly $\vec{\pi}$ is an equilibrium point of the infinite model if and only if $\frac{d\pi_i}{dt} = 0$ for all $i \in \mathbb{Z}$. Moreover,  since agents start with 0 tokens and exchange one token at each transition, the expected number of tokens agents have is $0$, and it can be written as follows: 
\begin{equation}
\sum_{i \in {Z}} i \cdot \frac{n_i}{n} = \sum_{i \geq 1} i \cdot \frac{ n_i}{n} + \sum_{i \leq 0} i \cdot \frac{n_i}{n} = \sum_{i \geq 1} \frac{m_i}{n} - \sum_{i \leq 0} \frac{n - m_i}{n} = \sum_{i \geq 1} z_i -  \sum_{i \leq 0}(1 - z_i) = 0. \label{eq:expnumofscrips}
\end{equation}
Using \eqref{eq:diffeq1} and \eqref{eq:expnumofscrips},  $\vec{\pi}$ can be found by solving the following system of equations:
\begin{equation}\label{eq:det1}
 (\pi_{i-1}^2 - \pi_i^2) - (\pi_i - \pi_{i+1}) = 0 \;\; \text{for all} \;\; i \in \mathbb{Z},
\end{equation}
\begin{equation}\label{eq:det2}
\sum_{i \geq 1}\pi_i -  \sum_{i \leq 0} (1 - \pi_i) = 0.
\end{equation}

\noindent Note that (\ref{eq:det1}) implies $\pi_{i+1} - \pi_i^2 = \pi_0 - \pi_{-1}^2$ for all $i \in \mathbb{Z}$. Since $\lim_{i \rightarrow \infty} \pi_i = 0$, we have $\pi_0 = \pi_{-1}^2$, and inductively we have the following relation:
\begin{equation}\label{eq:rec2}
 \pi_{i+1} = \pi_i^2 \;\; \text{for all} \;\; i \in \mathbb{Z}.
\end{equation}

\noindent Using (\ref{eq:rec2}), (\ref{eq:det2}) becomes $\sum_{i \geq 1} \pi_0^{2^i} - \sum_{i \geq 0} (1 - \pi_0^{2^{-i}}) = 0$. Such series are known as \textit{lacunary series}, where the function has no analytic continuation across its disc of convergence (see Hadamard's Gap Theorem). There is no closed form expression for such series to the best of our knowledge and thus, we are unable to find the equilibrium point explicitly. Note that in the long-run, the probability that $|s_i^t| \leq M$ for any $i \in \mathcal{A}$ is equal to $\pi_{- M} - \pi_{M + 1}.$ Using (\ref{eq:rec2}), proving that for all $M \in \mathbb{Z}_{+}$, $\lim_{n \rightarrow \infty} p_{n,M} \geq 1 - a^M$ for some $a \in (0,1)$, is equivalent to proving that the following inequalities hold:
\begin{equation}\label{closedformeq}
\pi_0^{2^{-M}} - \pi_0^{2^{M+1}} \geq 1 - a^{M} \;\; \text{for all} \;\; M \in \mathbb{Z}_{+}.
\end{equation}

\begin{lemma}\label{loww}
We have $\frac{1}{2} < \pi_0 < \frac{3}{4}$.
\end{lemma}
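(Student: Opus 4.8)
## Proof proposal for Lemma \ref{loww}

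The plan is to reduce everything to a single scalar equation for $\pi_0$ and then pin it down by monotonicity plus two explicit estimates. First, note that $\pi_0$ can be neither $0$ nor $1$: by \eqref{eq:rec2} these would force $\pi_i=0$ for all $i$, respectively $\pi_i=1$ for all $i\ge 0$, and in either case one of the two series in \eqref{eq:det2} diverges. Hence $\pi_0\in(0,1)$, and iterating \eqref{eq:rec2} gives $\pi_i=\pi_0^{2^i}$ for $i\ge 0$ and $\pi_{-j}=\pi_0^{2^{-j}}$ for $j\ge 0$. Substituting into \eqref{eq:det2}, the condition becomes $h(\pi_0)=0$, where
\[
h(x) \;:=\; \sum_{i\ge 1} x^{2^i} \;-\; \sum_{j\ge 0}\bigl(1-x^{2^{-j}}\bigr), \qquad x\in(0,1).
\]
I would first record that both series converge on $(0,1)$ — the first is dominated by $\sum_{i\ge 1}x^{i+1}=x^2/(1-x)$ since $2^i\ge i+1$, and $1-x^{2^{-j}}\le 2^{-j}\ln(1/x)$ via $e^{-y}\ge 1-y$ bounds the second — and that $h$ is continuous and strictly increasing on $(0,1)$, since each summand $x^{2^i}$ and each summand $x^{2^{-j}}-1$ is strictly increasing. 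Therefore $\pi_0$ is the unique zero of $h$ in $(0,1)$, and it suffices to check that $h(1/2)<0$ and $h(3/4)>0$.

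For $\pi_0>1/2$, I would bound $h(1/2)$ by keeping the $i=1$ term of the first series exactly and dominating the remainder by a geometric tail, $\sum_{i\ge 1}(1/2)^{2^i}\le \frac14+\sum_{k\ge 4}2^{-k}=\frac14+\frac18=\frac38$, while the second series is at least its $j=0$ term, namely $1-(1/2)^{2^0}=\frac12$; hence $h(1/2)\le\frac38-\frac12<0$.

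For $\pi_0<3/4$, I would keep the first two terms of the first series, $\sum_{i\ge 1}(3/4)^{2^i}\ge(3/4)^2+(3/4)^4=\frac{225}{256}$, and upper-bound the subtracted series by $\frac14$ (its $j=0$ term) plus, for $j\ge 1$, $1-(3/4)^{2^{-j}}=1-e^{-2^{-j}\ln(4/3)}\le 2^{-j}\ln(4/3)<2^{-j}/3$, using $\ln(4/3)<\frac13$ (which follows from $e^{1/3}>1+\frac13+\frac1{18}=\frac{25}{18}>\frac43$); summing gives $\sum_{j\ge 0}(1-(3/4)^{2^{-j}})<\frac14+\frac13=\frac7{12}$. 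Since $\frac{225}{256}>\frac7{12}$, we conclude $h(3/4)>0$.

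There is no real obstacle: the whole argument is monotonicity together with two hand computations. The only points demanding a bit of care are justifying that the rearranged series converge (so that substituting into \eqref{eq:det2} is legitimate) and picking enough leading terms of each series that the resulting rational/logarithmic bounds separate cleanly — one term per side is enough at $x=1/2$, but at $x=3/4$ one needs two terms of the dominant series because $(3/4)^2=\frac{9}{16}<\frac7{12}$.
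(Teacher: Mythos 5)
Your proof is correct, and for the lower bound it takes a genuinely different route from the paper. Both arguments start from the same scalar equation $h(\pi_0)=0$ with $h(x)=\sum_{i\ge 1}x^{2^i}-\sum_{j\ge 0}(1-x^{2^{-j}})$, obtained by substituting \eqref{eq:rec2} into \eqref{eq:det2}. For $\pi_0>\tfrac12$, the paper does not use global monotonicity of $h$: it regroups \eqref{eq:det2} as $\sum_{M\ge1}(\pi_M+\pi_{-M+1}-1)=0$ and shows that under $\pi_0\le\tfrac12$ every paired term is negative, by proving that $f(M)=1-\pi_0^{2^M}-\pi_0^{2^{-M+1}}$ is positive at $M=1$, tends to $0$, and is decreasing in $M$ (a derivative computation). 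Your observation that $h$ is strictly increasing in $x$, so that $h(\pi_0)=0$ together with $h(1/2)<0$ forces $\pi_0>1/2$, replaces that calculus argument with a single numerical bound ($3/8-1/2<0$) and is cleaner; note you do not even need continuity or uniqueness of the zero for the conclusion, only strict monotonicity. For $\pi_0<\tfrac34$ the two proofs are essentially the same in spirit — lower-bound the positive series and upper-bound the subtracted one at $x=3/4$ — with the paper dominating the tail $\sum_{j}(1-\pi_0^{2^{-j}})$ by a geometric series via the ratio $(1-\pi_0^{2^{-i}})/(1-\pi_0^{2^{-i-1}})=1+\pi_0^{2^{-i-1}}$, where you instead use $1-e^{-y}\le y$ and $\ln(4/3)<1/3$; both yield bounds ($0.8$ vs.\ $0.6$ in the paper, $225/256$ vs.\ $7/12$ in yours) that separate cleanly. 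Your numerical estimates all check out.
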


\noindent The proof of Lemma \ref{loww} is given in Appendix \ref{appendix:section4}, and now we use it to prove Theorem \ref{infiniteconstant}.

\proof{Proof of Theorem \ref{infiniteconstant}.} We will show that $g(M) := \pi_0^{2^{-M}} - \pi_0^{2^{M+1}} -1 + 2^{-M} \geq 0$ for all positive integers $M$, which implies that (\ref{closedformeq}) is satisfied with $a = \frac{1}{2}$. Note that $\lim_{M \rightarrow \infty} g(M) = 0$. Hence, we will show that $g(M) \geq g(M+1)$ for all $M \in \mathbb{Z}_{+}$. It is easy to check that $g(M) \geq 0$ for all $M \leq 6$ using Lemma \ref{loww}. The derivative of $g$ with respect to $M$ is $ \frac{d g(M)}{d M} = \log(\pi_0) \cdot \log(0.5) \cdot \pi_0^{2^{-M}} \cdot 2^{-M} + \log(\pi_0)\cdot\log(0.5)\cdot \pi_0^{2^{M+1}}\cdot2^{2M+1} \cdot 2^{-M} - \log(2)\cdot2^{-M}$, where $\log$ is the natural logarithm. By Lemma \ref{loww}, we have $\frac{1}{2} < \pi_0 < \frac{3}{4}$, and thus $0.19 < \log{\pi_0} \cdot \log(0.5) < 0.5.$ Since $\pi_0^{2^{-M}} \leq 1$, the first term in $\frac{d g(M)}{d M}$ is upper bounded by $\frac{1}{2} \cdot 2^{-M}$. For the second term, note that $\pi_0^8 \cdot 2 < \frac{1}{3}$. Since $2^{M+1} > 8 (2M+1)$ for all $M \geq 6$, the second term is upper bounded by $\frac{1}{2} \cdot \frac{1}{3} \cdot 2^{-M}$, and $\frac{d g(M)}{d M}$ is upper bounded by $\frac{1}{2} \cdot 2^{-M} + \frac{1}{6} \cdot 2^{-M} - \log(2) \cdot 2^{-M}$,

which is negative since $\log(2) > 2/3$.
Hence, we have shown that $g(M)$ is a decreasing function on $[6,\infty]$, which concludes the proof. 
\Halmos
\endproof

We close this section with the following conjecture:
 \begin{conjecture}\label{conjecturemonotinicity}
Assume that the agents are symmetric. Then for all $n \geq 2$, (C1) holds with $f(M)=a^M$ for all $M \in \mathbb{Z}_+$, for some $a \in [1/3,1/2]$.
\end{conjecture}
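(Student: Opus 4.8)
\textbf{Proof proposal for Conjecture~\ref{conjecturemonotinicity}.}
The plan is to prove the two ends of the conjectured interval separately and then interpolate. For the lower bound $a \geq 1/3$, I would argue that $n=2$ is extremal: Proposition~\ref{twoagentsthm} gives exactly $\mathbb{P}_{\pi}(|s_i^t| > M) = \tfrac{2}{3}(1/3)^M$ in the symmetric two-agent case, so no universal rate $a < 1/3$ can work across all $n$. To upgrade this heuristic ``$n=2$ is best'' into a theorem, I would try to establish a stochastic-domination (or coupling) statement: for symmetric agents, the stationary law of $|s_1^t|$ is stochastically increasing in $n$. A natural route is to couple the $n$-agent chain with the $(n+1)$-agent chain so that the marginal token process of a tagged agent in the larger system dominates that in the smaller system; intuitively, adding an agent dilutes the chance that the tagged agent is selected to provide service precisely when its balance is low, so excursions away from $0$ become longer. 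If full monotonicity is too delicate, a weaker sufficient statement is that $\mathbb{P}_{\pi,n}(|s_1^t|>M) \geq c\,(1/3)^M$ for a constant $c>0$ independent of $n$, which already forbids $a<1/3$.

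For the upper bound $a \leq 1/2$, the idea is to show that the large-$n$ case is the \emph{worst} case, i.e. $p_{n,M}$ is monotone decreasing in $n$ (equivalently $\mathbb{P}_{\pi,n}(|s_1^t|>M)$ is increasing in $n$), and then invoke Theorem~\ref{infiniteconstant}, which already gives $\lim_{n\to\infty} p_{n,M} \geq 1 - (1/2)^M$. Combined with monotonicity this yields $p_{n,M} \geq 1 - (1/2)^M$ for every finite $n\geq 2$, i.e.\ (C1) with $f(M) = (1/2)^M$ uniformly. The same coupling sought for the lower bound — a monotone coupling of the $n$- and $(n+1)$-agent symmetric chains — would furnish both halves simultaneously, which is why constructing that coupling is the crux of the whole conjecture. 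Concretely, I would try to build the two chains on a common probability space, driving both with the same sequence of requester/provider randomness, and track a potential comparing the tagged agent's balance in the two systems; the obstacle is that the minimum-token selection rule is non-local, so a token handed to a low agent in one system need not correspond to any clean event in the other, and the ordering can be transiently violated. One fix is to compare not the pathwise trajectories but the excursion lengths $T_0$ and appeal to (C2): show $\mathbb{E}_n[T_0]$ is increasing in $n$ (we know it is $3$ at $n=2$ and the density-dependent analysis suggests it grows), and relate the tail of $|s_1^t|$ under $\pi$ to the maximum displacement within a typical excursion.

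Once the monotone coupling (or a suitable weakening) is in place, the remaining steps are routine: (i) verify the base cases $n=2$ (Proposition~\ref{twoagentsthm}, $a=1/3$) and $n=\infty$ (Theorem~\ref{infiniteconstant}, $a=1/2$); (ii) use monotonicity to sandwich every finite $n$ between them, giving $\mathbb{P}_{\pi,n}(|s_i^t|>M) \leq \tfrac{2}{3}(1/2)^M \leq (1/2)^M$ for all $n$, which establishes (C1) with $f(M)=a^M$, $a=1/2$; and (iii) if one wants the sharper per-$n$ statement with $a=a(n)\in[1/3,1/2]$ increasing in $n$, read off $a(n)$ from the stationary distribution, using the explicit recursion available for small $n$ (a birth–death analysis analogous to the $n=2$ case, now on the state space $\{s:\sum_i s_i = 0\}$) and the lacunary-series characterization \eqref{eq:rec2}--\eqref{eq:det2} at $n=\infty$. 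I expect step (i) of the coupling argument — proving that the tagged-agent stationary tail is genuinely monotone in $n$ — to be the main obstacle; everything downstream of it is bookkeeping with Markov's inequality and the identities already in the paper.
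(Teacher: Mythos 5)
The statement you are asked to prove is, in the paper, explicitly a \emph{conjecture}: the authors do not prove it, and the only argument they offer is exactly the conditional one you propose. They observe that Proposition~\ref{twoagentsthm} gives the rate $1/3$ at $n=2$, that Theorem~\ref{infiniteconstant} gives the rate $1/2$ in the $n\to\infty$ limit, and that \emph{if} the monotonicity $p_{n+1,M}\leq p_{n,M}$ held for all $n$ and $M$ (their Conjecture~\ref{conjecture1}, supported only by the simulations in Figure~\ref{fig:conjecture}), then every finite $n$ would be sandwiched between the two extremes and Conjecture~\ref{conjecturemonotinicity} would follow. Your proposal reproduces this reduction and then, candidly, identifies the monotone coupling as ``the main obstacle.'' That obstacle is the entire content of the open problem: neither you nor the paper constructs the coupling, and you yourself note the reason it is hard --- the minimum-token selection rule is non-local, so driving the $n$- and $(n+1)$-agent chains with common randomness does not preserve any ordering of the tagged agent's balance pathwise. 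The fallback routes you sketch (a uniform constant $c$ with $\mathbb{P}_{\pi,n}(|s_1^t|>M)\geq c(1/3)^M$, or monotonicity of $\mathbb{E}_n[T_0]$ plus an excursion-to-tail comparison) are likewise unproven and, in the second case, would still require a quantitative link between excursion maxima and stationary tails that is not supplied.

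Two smaller points. First, the lower end of the interval does not actually need to be ``proved'': the conjecture asserts the existence of \emph{some} $a\in[1/3,1/2]$ for which the tail bound holds, so a bound with a smaller $a$ would only be stronger; the $1/3$ merely records that $n=2$ attains that rate exactly, so the interval cannot be shrunk from below. The real mathematical content is the uniform-in-$n$ upper bound $a\leq 1/2$, and that is precisely where the missing monotonicity (or any substitute for it) is needed. Second, even granting monotonicity, your step (ii) would deliver $f(M)=(1/2)^M$ for all $n$, which suffices for the conjecture as stated; the refinement in your step (iii) of an increasing $a(n)$ is not required. In summary: your proposal is a faithful reconstruction of the paper's own heuristic, but it does not close the gap the paper leaves open, so it is not a proof.
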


\noindent Recall that $p_{n,M}=\mathbb{P}_{\pi}(|s_1^t| \leq M)$ when there are $n$ symmetric agents, where $\pi$ is the steady-state distribution of the Markov chain $(s^t:t\geq0)$.\footnote{Because of the symmetry, in the steady-state distribution, the probability of  $-M \leq s_i^t \leq M$ equals to the probability of $-M \leq s_j^t \leq M$ for any agents $i,j \in \mathcal{A}$.} Figure \ref{fig:conjecture} shows the behavior of $p_{n,M}$ and suggests the following conjecture:

 \begin{figure}[htb!]
  \centering
  \begin{minipage}[b]{0.45\textwidth}
    \includegraphics[width=1\textwidth]{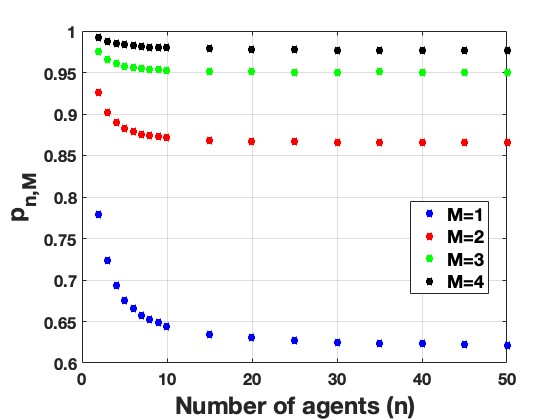}
  \end{minipage}
  \hfill
  \begin{minipage}[b]{0.45\textwidth}
    \includegraphics[width=1\textwidth]{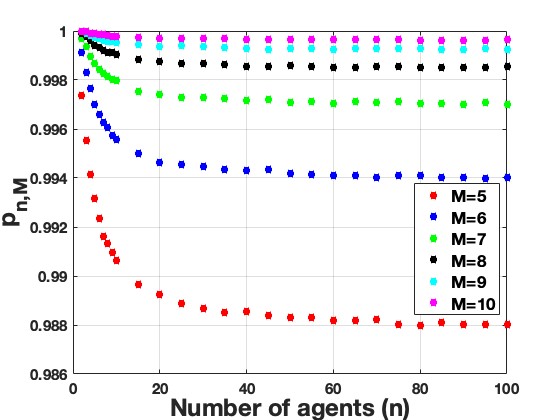}
  \end{minipage}
  \caption{The behavior of $p_{n,M}$  versus $n$. In the simulation, the market runs until $t=2\cdot10^7$ (the first $500,000$ time periods are ignored). For each $n$, the probabilities for each agent are computed separately, and the means are reported.}
  \label{fig:conjecture}
\end{figure}
 \begin{conjecture}\label{conjecture1}
$p_{{n+1},M} \leq p_{n,M}$ for all $n \geq 2$ and for all $M \in \mathbb{Z}_{+}$.
\end{conjecture}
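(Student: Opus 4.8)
\myparagraph{A possible approach to Conjecture~\ref{conjecture1}.} The conjecture is precisely the statement that, under the stationary distribution of the $n$-agent chain, $|s_1^t|$ is stochastically increasing in $n$: for every $M\in\mathbb{Z}_+$, saying $p_{n+1,M}\le p_{n,M}$ is the same as saying the tail $1-p_{n,M}=\mathbb{P}(|s_1^t|>M)$ is nondecreasing in $n$. The two endpoints of the range are already pinned down and consistent with this: Proposition~\ref{twoagentsthm} gives $p_{2,M}=1-\tfrac{2}{3}\left(\tfrac{1}{3}\right)^M$, and the analysis of \S\ref{bounded} gives $p_{\infty,M}:=\lim_{n\to\infty}p_{n,M}=\pi_{-M}-\pi_{M+1}=\pi_0^{2^{-M}}-\pi_0^{2^{M+1}}$ with $\tfrac12<\pi_0<\tfrac34$ by Lemma~\ref{loww}. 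Moreover, Conjecture~\ref{conjecture1} would at once imply Conjecture~\ref{conjecturemonotinicity} with $a=\tfrac12$, since monotonicity in $n$ together with Theorem~\ref{infiniteconstant} would give $p_{n,M}\ge p_{\infty,M}\ge 1-(1/2)^M$ for every $n$.

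The natural attack is a monotone coupling: run the $n$-agent chain $(s^t)$ and the $(n{+}1)$-agent chain $(\tilde s^t)$ on one probability space --- the extra agent of the larger system also starting at $0$ --- and couple the uniform requester/provider choices so that at every $t$ the empirical token distribution of $(\tilde s^t)$ is \emph{less peaked around $0$} than that of $(s^t)$, i.e.\ $\mathbb{P}(|\tilde s^t_1|>M)\ge\mathbb{P}(|s^t_1|>M)$ for all $M$; passing to the stationary limit would give the claim. A complementary and possibly more tractable route works directly with the empirical complementary c.d.f.\ $z_i$, the fraction of agents holding at least $i$ tokens. Since the $d$ providers are sampled \emph{with replacement}, the identity $\mathbb{P}(\text{provider holds exactly }i\text{ tokens})=z_i^d-z_{i+1}^d$ holds exactly at every finite $n$; computing the expected per-period change of $m_i=nz_i$ and equating it to zero at stationarity yields
\begin{equation*}
\left(1-\tfrac{1}{n}\right)\mathbb{E}\!\left[z_{i-1}^d-z_i^d\right]+\tfrac{1}{n}\,\mathbb{E}\!\left[z_i^d-z_{i+1}^d\right]=\mathbb{E}[z_i]-\mathbb{E}[z_{i+1}]\qquad\text{for all }i\in\mathbb{Z},
\end{equation*}
with expectations under the $n$-agent steady state (as $n\to\infty$ this collapses to \eqref{eq:det1}). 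One would subtract this family for $n$ and for $n{+}1$, handle the nonlinearity with Jensen's inequality $\mathbb{E}[z_i^d]\ge(\mathbb{E}[z_i])^d$, and run a discrete maximum principle on the resulting difference equation --- anchored by the normalization $\sum_{i\ge1}\mathbb{E}[z_i]=\sum_{i\le0}(1-\mathbb{E}[z_i])$ --- to propagate ``$\mathbb{E}[z_{-M}]-\mathbb{E}[z_{M+1}]$ is nonincreasing in $n$'' outward from $i=0$.

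The main obstacle --- and the reason this remains a conjecture --- is that both routes meet the same difficulty that already blocks a closed form for $\pi_0$. The minimum-token selection rule is not coordinatewise monotone: lowering one agent's balance may change which agent is selected as provider and hence perturb a \emph{different} agent's path, so the standard grand-coupling machinery for stochastically monotone chains does not apply off the shelf, and there is no canonical way to embed an $n$-agent configuration into an $(n{+}1)$-agent one --- even the naive ``each small-system agent has an avatar in the large system'' coupling only yields $\mathbb{P}(|s^t_1|>M)\le\tfrac{n+1}{n}\mathbb{P}(|\tilde s^t_1|>M)$, which is too weak. The stationary identities above are not self-contained either: they involve all moments of the $z_i$, not just their means, and even the $n\to\infty$ version is a lacunary series. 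I would therefore expect the realistic increments of progress to be: (i) use the exponential tail bounds of Proposition~\ref{twoagentsthm} and Theorem~\ref{infiniteconstant} to reduce, for each fixed $n$, to finitely many values of $M$; (ii) settle the base case $p_{3,M}\le p_{2,M}$ by a direct analysis of the $n=3$ chain, which lives on a two-dimensional lattice; and (iii) attempt an inductive ``add one agent'' coupling building on (ii). A complete proof appears to require an idea genuinely beyond the Lyapunov-function and mean-field techniques developed in this paper.
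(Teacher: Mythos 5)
You have not proved the statement, but neither does the paper: Conjecture \ref{conjecture1} is stated as an open conjecture, supported only by the simulations in Figure \ref{fig:conjecture} and by the consistency check between the simulated values of $p_{50,M}$ and the lacunary-series solution of the infinite model. There is therefore no proof in the paper to compare against, and your submission is correctly calibrated: you identify the statement as open, and the facts you cite are accurate --- the endpoint $p_{2,M}=1-\frac{2}{3}(1/3)^M$ from Proposition \ref{twoagentsthm}, the limit $p_{\infty,M}=\pi_0^{2^{-M}}-\pi_0^{2^{M+1}}$ with $\frac{1}{2}<\pi_0<\frac{3}{4}$ from Theorem \ref{infiniteconstant} and Lemma \ref{loww}, and the fact that monotonicity in $n$ would deliver Conjecture \ref{conjecturemonotinicity} exactly as the paper remarks.

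Two substantive comments on your proposed routes. First, your finite-$n$ stationary identity $(1-\frac{1}{n})\mathbb{E}[z_{i-1}^d-z_i^d]+\frac{1}{n}\mathbb{E}[z_i^d-z_{i+1}^d]=\mathbb{E}[z_i]-\mathbb{E}[z_{i+1}]$ checks out (the $\frac{1}{n}$ terms correctly account for the requester-equals-provider event, since the requester is uniform and independent of the provider selection), and it is a genuine refinement of \eqref{eq:det1} that the paper does not record; but, as you concede, it ties together the means and the $d$-th moments of the $z_i$, and Jensen's inequality pushes the comparison in a direction that does not obviously close. Second, your diagnosis of why a monotone coupling fails --- the minimum-token rule is not coordinatewise monotone, so perturbing one agent's balance reroutes service to a different agent --- is the right obstruction to name, and it is corroborated by the paper's own observation (see Figure \ref{fig:conjecture2}) that the one-sided probability $q_{n,M}=\mathbb{P}(s_1^t\leq M)$ is apparently \emph{not} monotone in $n$ even though $r_{n,M}$ and $p_{n,M}$ appear to be; this rules out any coupling that would establish monotonicity separately for each tail. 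In short, the proposal is an honest and internally consistent assessment of an open problem, not a proof, and it neither contradicts nor goes materially beyond what the paper establishes.
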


Note that Figure \ref{fig:conjecture} suggests that for each $M$, $p_{n,M}$ converges to a limit point as $n$ grows large. The precise values from the left figure for $n=50$ are $p_{50,1} = 0.6184$, $p_{50,2} = 0.8645$, $p_{50,3} = 0.9500$ and $p_{50,4} = 0.9759$. Let $\pi_{-4} = \alpha$.  Since $p_{\infty,M} =  \pi_{-M} - \pi_{M+1}$, using (\ref{eq:rec2}), consider the following sets of equations and the corresponding positive real solutions:

$\;\;\;\;\;\;\bullet$ $p_{50,1} = \alpha^{8} - \alpha^{64} = 0.6184 \;\; \text{with two positive real roots} \;\; \alpha \approx 0.947656,\textbf{0.975067}.$

$\;\;\;\;\;\;\bullet$ $p_{50,2} =  \alpha^{4} - \alpha^{128} = 0.8645 \;\; \text{with two positive real roots} \;\; \alpha \approx 0.969503,\textbf{0.975035}.$

$\;\;\;\;\;\;\bullet$ $p_{50,3} = \alpha^{2} - \alpha^{256} = 0.9500 \;\; \text{with two positive real roots} \;\; \alpha \approx \textbf{0.975598},0.984804.$

$\;\;\;\;\;\;\bullet$ $p_{50,4} = \alpha - \alpha^{512} = 0.9759 \;\; \text{with two positive real roots} \;\; \alpha \approx \textbf{0.975904},0.991964.$

The closeness of the highlighted roots in the above sets of equations suggests that there is a consistency between the approximate limit point in Figure \ref{fig:conjecture} and our analysis for the infinite model. Note that $\pi_{-4} \approx 0.975$ implies $\pi_0 \approx 0.667$ by (\ref{eq:rec2}). Using our analysis for the infinite model and assuming Conjecture \ref{conjecture1}, observe  that (C1) is satisfied for the system with any number of symmetric agents with $\frac{1}{3} \leq a \leq \frac{1}{2}$, which implies Conjecture \ref{conjecturemonotinicity}.

\begin{figure}
  \centering
  \begin{minipage}[b]{0.45\textwidth}
    \includegraphics[width=\textwidth]{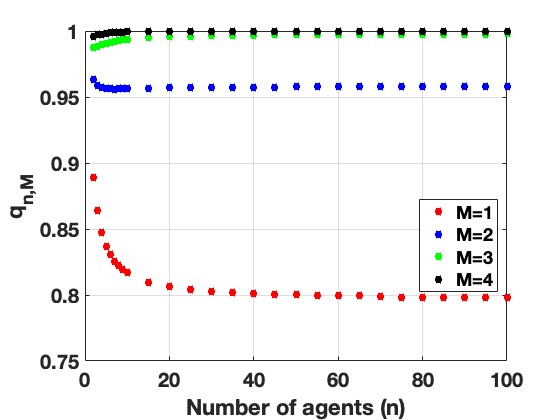}
  \end{minipage}
  \hfill
  \begin{minipage}[b]{0.45\textwidth}
    \includegraphics[width=\textwidth]{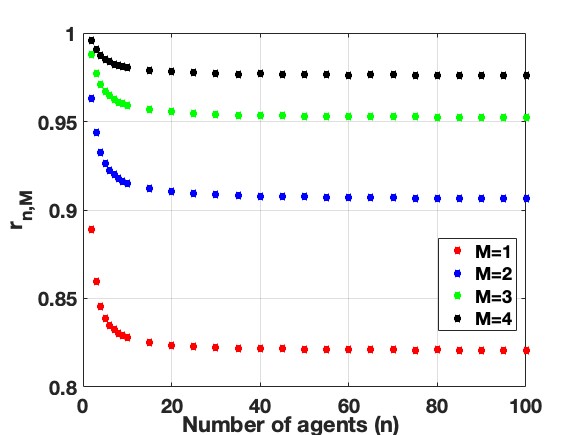}
  \end{minipage}
  \caption{The behaviors of $q_{n,M}$ and $r_{n,M}$ versus $n$. In the simulations, the market runs until $t=2\cdot10^7$ (the first $500,000$ time periods are ignored). For each $n$, the probabilities for each agent are computed separately, and the means are reported.}
  \label{fig:conjecture2}
\end{figure}

\begin{remark}
Given that there are $n$ symmetric agents,  let $q_{n,M} = \mathbb{P}(s_1^t \leq M)$ and $r_{n,M} = \mathbb{P}(s_1^t \geq -M)$, for all $M \in \mathbb{Z}_+$. Figure \ref{fig:conjecture2} shows the behaviors of $q_{n,M}$ and $r_{n.M}$. Interestingly, as Figure \ref{fig:conjecture2} shows, the monotonicity property seems to hold for $r_{n,M}$, but not for $q_{n,M}$.
\end{remark}

\section{Strategic considerations}\label{sec:game}

This section offers a microeconomic justification for the notion of stability defined by conditions (C1) and (C2). Suppose that agents gain a payoff $\pi > 0$ whenever they receive service, and incur a cost $c \in [0,\pi)$ whenever they provide service. We assume that $c<\pi$, since otherwise providing service is inefficient, and the optimal mechanism is to never provide service. 

We consider a token system $(n,P,Q,d)$ under the minimum token selection rule. For simplicity, we focus on the symmetric case of $p_i=q_i=\frac{1}{n}$ for all $i \in \mathcal{A}$, so that all agents are equally likely to request service and be available to provide service. We suppose that at time $t=0$, agents can either opt out of the token system, or else commit to participate for some fixed number of periods $T$. Assuming all $n$ agents participate, each  agent $i$ serves as a requester some (random) number of times during the process, which we denote by $A_i$, and each agent serves as a provider $B_i$ times. Thus, the total payoff to agent $i$ is $U_i := \pi A_i - c B_i$. Note that by the symmetry assumption, the expectation of $U_i$ is 
\begin{align*}
    \mathbb{E}[U_i] = \frac{T}{n}(\pi-c).
\end{align*}
In particular, this is strictly larger than zero, so if agents maximize their expected payoffs, then they will opt in. Note that this holds independent of $d$ (in particular, even in the unstable case of $d=1$). 

The advantage of stability is not that it provides agents with higher expected payoff, but rather that it lowers their risk. A simple model of risk aversion used in economics is that of mean-variance preferences.\footnote{We choose this classical model  \citep[e.g., see][]{markowitz1952portfolio, tobin1958liquidity} for its simplicity, but a similar conclusion applies for other, more sophisticated models, such as CARA preferences over monetary gambles. See \cite{pomatto2020stochastic} for a recent axiomatization of mean-variance preferences.}  Under these preferences, which are parametrized by $\kappa > 0$, agents desire a higher expected payoff but also a lower variance, and evaluate a risky prospect by subtracting from its expectation $\kappa$ times its variance. In particular, in our case agent $i$ will choose to opt in if $\mathbb{E}[U_i] - \kappa \mathrm{Var}(U_i) > 0$. 

When $d=1$, the variance of $U_i$ is linear in $T$, since $A_i$ and $B_i$ are independent. Thus, agents will only opt in if $\kappa$ is low enough. Intuitively, this unstable system is risky because agents are not unlikely to be left with many tokens at time $T$, which would mean that they provided service more often than they received it, resulting in a negative payoff.

In contrast, when $d \geq 2$, stability implies that the variance of $U_i$ is bounded from above, independent of $T$. Since the expectation of $U_i$ increases linearly, it follows that even agents with high $\kappa$  will opt in, assuming $T$ is large enough. This observation provides a foundation for why stable token systems are attractive to strategic economic agents: stable systems reduce agents' risk.


\noindent{\bf Relating to the literature.}
We further discuss our stability notion in the context  of the  literature that considers dynamic favor exchange games with discounting. Several papers study  settings with payoff $\pi$ and cost $c<\pi$  per service as above, but (instead of choosing whether to opt in) agents' strategies depend on  histories of exchanges. So an agent can refuse to provide service at any time period. In these models, each period one agent is selected at random to  request service, and another (available) agent is selected to provide service according to some selection rule.

Since $c<\pi$, always trading (provisioning a service request)  maximizes social welfare. However, demand is not always met, and there are two sources for this efficiency loss.  First, an agent may decline to provide service when her number of tokens exceeds some threshold. This indeed happens when the service provider is selected uniformly at random \citep{friedman2006efficiency, kash2015equilibrium}. The incentive to decline service is alleviated when there is a steady-state \citep{johnson2014analyzing, bo2018}, since an agent will be rewarded for her service in finite time in expectation.  Each of the conditions (C1) and (C2)  implies that the token system has a steady-state.

While the token system under the minimum token selection rule has a steady-state, an agent may still prefer to decline service provision when her number of tokens exceeds some threshold. Indeed, \citet{johnson2014analyzing} and  \citet{bo2018} show the existence of an $\epsilon$-Nash equilibrium, thus bounding  agents' benefit from declining service.\footnote{\citet{johnson2014analyzing} use punishments to attain this, whereas \citet{bo2018} assume punishments are not feasible, but they require a large market, in which many agents are available to provide service.} Our results on condition (C1) suggest how concentrated the token profiles are, and in particular, the likelihood of different level of thresholds being reached. So, for example, even if punishments are feasible, how concentrated the token profile is (i.e., the structure of $f$ in (C1)) can guide a social planner whether punishments are needed.

Second, inefficiency can arise when an agent demands a service, but has zero tokens. \cite{johnson2014analyzing} show that (when service is always provided when feasible) the minimum token selection rule minimizes the probability of this event. \citet{bo2018} show that the likelihood of this event vanishes when $n$ grows large and a constant fraction of agents are available to provide service for each request. In our setting, instead of zero,  a social planner may decide to ban an agent from  requesting service if  her tokens are below some (possibly negative) threshold. Condition (C1) implies that an agent is unlikely to have a large deficit of tokens and explicitly quantifies the probability of this event.  Finally, (C2)  complements  (C1) so that agents benefit from providing service in expectation, ensuring average positive utility in finite time.

\section{Application: Kidney Exchange} \label{sec:kidneyapplication}

In this section, we discuss how our model and insights can be applied to multi-hospital kidney exchange platforms. We begin with some background, then discuss how to apply the model, and  provide results from numerical simulations to conclude.

\subsection{Background} Kidney exchange platforms arrange swaps between incompatible patient-donor pairs so that patients receive a transplant from a compatible donor. Compatibility between a patient and a donor requires both ABO (blood-type) compatibility\footnote{The donor cannot have a blood protein that the patient does not have.} and tissue-type compatibility, which means that the patient cannot have an antibody to one of the donor's antigens. Patient sensitivity is based on her  antibodies.\footnote{The Panel Reactive Antibody (PRA)  measures  the likelihood of a patient to be tissue-type incompatible with a random donor in the population based on her antibodies.}

\noindent{\bf Easy- and hard-to-match pairs.} Pairs  can generally be categorized as easy- or hard-to-match based on their  blood types and the patient sensitivity levels. To get some intuition, assume for simplicity that there are just two blood types, A and O. In a typical pool, there are many more O-A patient-donor pairs than A-O patient-donor pairs \citep{roth2007efficient}.\footnote{An A patient who is compatible with her intended O donor will go through a direct donation.} The shortage of O donors implies that O-A pairs are hard-to-match. A-O patient-donor pairs can be easy- or hard-to-match, depending on whether the A patient is highly sensitized or not.\footnote{Note that the patient is more likely to be sensitized than a random patient as it is tissue-type incompatible with her donor.} Similarly,  X-X incompatible pairs (O-O or A-A) are ABO compatible with each other, and a majority of these pairs have highly sensitized patients making them hard-to-match pairs \citep{agarwal2019market}. 
Efficiency is aligned with matching hard-to-match with easy-to-match pairs. Intuitively, easy-to-match pairs have a positive contribution to the exchange pool while hard-to-match pairs may compete to match with easy-to-match ones.  



\noindent{\bf Free riding and token systems.}
In the US, hospitals' participation on  platforms is voluntary, and they can decide which  pairs to enroll. A common  behavior that emerged is enrolling pairs they cannot match internally \citep{agarwal2019market}. This behavior emerged since  matching rules on platforms do not account for the types of pairs each hospital enrolls.\footnote{Consider for example a ``sub-market" with  A-O and O-A pairs, and two hospitals. Suppose one hospital enrolls all pairs but the other enrolls only O-A pairs that cannot be matched internally. If the platform selects randomly for each A-O which O-A pair to match with, then the second hospital will benefit without contributing to the system.}
To alleviate free riding behavior, the National Kidney Registry (NKR)  adopted an ad-hoc token system that rewards hospitals based on their marginal contribution to the platform.\footnote{This program is called the Center Liquidity Contribution program (see \url{http://www.kidneyregistry.org/docs/CLC_Guidelines.pdf}), inspired by \citet{ashlagi2014free} who pointed out the need for a ``frequent-flyer" program.} Exchanges arranged through the platform  generate a transfer of tokens between  hospitals and the platform. The token values for each type of pair are based on the type of the pair, and the  intention is to capture the marginal benefit to the platform. Notably,  the number of tokens is negative for hard-to-match pairs (such as O-A pairs, or other pairs with highly sensitized patients), and positive for easy-to-match pairs (such as A-O or X-X  patient-donor pairs with  patients that are not highly sensitized).\footnote{Almost all platforms also reward hospitals with ending altruistic-donor chains in their hospitals based on how many of altruistic donors they submitted to the system \citep{ashlagi2013kidney}.} The actual values have  been updated  over time based on experimentation and the marginal benefit observed in practice.

\subsection{Applying the model}

Our stylized model can be applied here as follows. Hospitals can be viewed as  agents in our model and we limit our attention to exchanges that include two pairs. At each time period an easy-to-match pairs arrives to  a (random)  hospital. A random subset of  hospitals have a hard-to-match pair in the pool that can exchange with the arriving easy-to-match pair.\footnote{It is useful to consider A-O and O-A pairs as a separate economy than the one of  O-O pairs. But as discussed earlier, within each of these, there are both easy- and hard-to-match pairs.} One of these hospitals is chosen for the exchange according to a selection rule, and the chosen hospital  pays one token to the hospital with the easy-to-match pair. The maximum token selection rule selects the hospital (among the subset) with the most number of tokens.

Since for each hospital, the number of tokens it has equals the difference between the number of easy- and number of hard-to-match pairs it matched,  the maximum token selection rule favors the hospital (with a compatible hard-to-match pair) who has the largest net contribution to the platform in order to match its  hard-to-match pairs.

Observe that in our stylized model, we associate a service request with an arrival of an easy-to-match pair to a (random) hospital, and available agents correspond to a subset of hospitals that have a hard-to-match pair in the pool that is compatible with the easy-to-match pair. Similarly in our model, since the number of tokens a hospital has equals the difference between the number of hard- and number of easy-to-match pairs it matched, the minimum token selection rule favors the hospital that contributed the most to the platform. This modeling is chosen to be consistent with the literature on trading favors and the power of two choices. Note that this is without loss of generality for our purposes, since the distribution of positive number of tokens under the maximum token selection rule symmetrically mirrors the distribution of negative number of tokens under the minimum token selection rule and vice versa. 



A few comments are in place.  First, in practice, easy-to-match pairs match quickly \citep{ashlagi2018effect},  which is aligned with our assumption that such agents are served immediately.  Moreover, hard-to-match pairs whose patients are highly sensitized may have very few possible matches \citep{ashlagi2012need}, which motivates the consideration of small values of $d$.

Tracking the evolution of the underlying compatibility graph  is highly intractable. Instead, our model  abstracts  away  from the graph structure and even from counting the number of  hard-to-match pairs  hospitals have in the pool (our service availability distribution $Q$ remains fixed).  Despite this abstraction, our model can shed light on  the predicted behavior of token systems for kidney exchange platforms. 

The symmetric case corresponds further to the case where hospitals are of the same size and have the same balance between easy- and hard-to-match pairs. That these balances are similar across hospitals is a natural assumption due to the biological structure. Next, we provide numerical simulations to provide further insights for the case in which hospitals differ in size and balance. We also discuss a simple asymmetric case with two types of agents in Appendix \ref{appendix:asymmetric} and derive analogous differential equations to the symmetric case, which can be used for numerical studies.

\subsection{Simulations}\label{extensionsection2}
In this section, we present results from numerical simulations using data from the  National Kidney Registry (NKR) platform. We simulated the token distribution of participating hospitals under the minimum token selection rule. There are 1881 patient-donor pairs and 84 hospitals in the data. We restrict our attention only  to  exchanges between two patient-donor pairs (2-way cycles) and refer to such exchanges as matches. Each time period represents one day. All hospitals have 0 tokens at the beginning, and the kidney exchange pool is initially empty.  At each time period, the following steps are conducted in order:

$\bullet$ \textit{\textbf{Step 1}: Sample with replacement a  patient-donor pair $p$ uniformly at random from the entire set of pairs. This pair $p$ is a tentative service requester.}

$\bullet$ \textit{\textbf{Step 2}: Among the pairs that are waiting in the pool, identify the set of  patient-donor pairs  who can match with the pair $p$.}

$\;\;\;\;\; -$ \textit{\textbf{Step 2.1}: If $p$ has more than one possible match, use the minimum token selection rule to determine the service provider; that is,  match $p$ with the patient-donor pair that belongs to the hospital with the least amount of tokens (ties are broken uniformly at random). After the match is performed, the hospital of the service requester pays one token to the hospital of the service provider.}

$\;\;\;\;\;-$ \textit{\textbf{Step 2.2}:: If $p$ has no possible matches,  add $p$ to the pool ($p$ is no longer a tentative service requester).}

$\bullet$ \textit{\textbf{Step 3}:  Each patient-donor pair in the pool leaves the system unmatched with probability $\frac{1}{365}$, independently.}

 Observe that we do not categorize the easy- and hard-to-match pairs based on characteristics prior to the simulation.  Instead, a pair is considered as a service requester (an easy-to-match pair) if it can match upon arrival to some pair in the pool. Figure \ref{84hosp1} shows the token distribution of 84 hospitals after running the simulation $10^5$ time periods. Note that overall, the token distribution is remarkably stable.\footnote{We note that the token distribution shows a similar behavior under the following alternative specification. When a pair is sampled, instead of associating the pair with its original hospital, it is associated with a hospital randomly, where probabilities are proportional to hospital sizes.} 

\begin{figure}[]
  \centering
  \begin{minipage}[b]{0.45\textwidth}
    \includegraphics[width=1\textwidth]{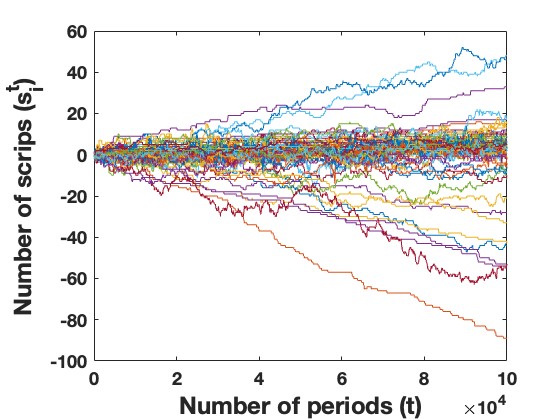}
    \caption{The token distribution of 84 hospitals under the minimum token selection rule after $10^5$ time periods.}
\label{84hosp1}
  \end{minipage}
  \hfill
  \begin{minipage}[b]{0.45\textwidth}
    \includegraphics[width=1\textwidth]{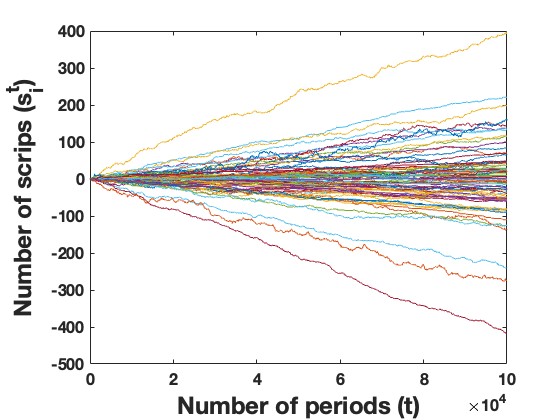}
    \caption{The token distribution of 84 hospitals under the uniform selection rule after $10^5$ time periods.}
\label{84hosp2}
  \end{minipage}
\end{figure}

The are seven hospitals whose number of tokens drop below -30; the number of pairs in these hospitals are $1,2,2,2,2,5$, and 28. The reason of deviation for the hospitals with only few pairs is apparent from the data; these hospitals only have easy-to-match  pairs, and they  match immediately upon arrival. The hospital with 28 pairs also has a large imbalance in favor of easy-to-match pairs (in contrast to the composition in typically large hospitals). This prevents the token system from rewarding back such hospitals. 
Three hospitals reach more than 30 tokens and have the following number of pairs: $6,13$, and $18$. The reason for the deviation in this case is that almost all of their pairs are hard-to-match. Figure \ref{84hosp2} shows the token distribution of hospitals when the service provider is chosen uniformly at random, instead of using the minimum token selection rule. Note that there are large deviations and no oscillations.
\begin{remark}
The simulations also reveal  that when there is at least one match in the pool for the service requester, almost 67\% of the time there are at least two compatible pairs for the service requester; this suggests that there is often multiplicity of matchings (tie-breakings). However, the  average number of matches for an easy-to-match pair (i.e., when a pair matches immediately upon arrival) is almost 7. This suggests that the average number of potential matches in the pool (or service availability)  is small. So in this case, even few ties allow the token system to be stable.
\end{remark}

\section{Final remarks}

This paper adapts methodologies from stochastic processes and ideas from the power of two choices literature to illustrate that token systems based on the minimum token selection rule are likely to behave well even in thin  marketplaces, where there can be very little  availability of supply.  We identified settings, under which the token system is stable when only few agents are available to provide service for any service request. Our analysis further provides concentration bounds for the token distribution. We further discuss why a stable token system is attractive for strategic agents interested in low participation risk.

In the context of kidney exchange, multi-hospital exchange platforms increase the chance for hospitals to match their hard-to-match pairs. To alleviate free riding by hospitals, token systems have been proposed and applied to incentivize participation by hospitals  by accounting their contribution to the platform \citep{agarwal2019market}.
Our findings suggest that the possibility of breaking ties in the matching process, even among few hospitals, enables the stability of the token system. Breaking ties based on  token balances allows platforms to reward hospitals with their contribution,  assuring  the difference between hospitals' tokens remains small. Numerical experiments reveal that tie-breakings are  likely to happen in kidney exchange,  and token systems that account for hospitals' contributions can ensure cooperation between hospitals. In practice, platforms use ad-hoc rules to increase participation, and several platforms seek to equalize the number of donors and patients of each ``player" transplanted in every  match \citep{biro2019modelling,mincu2020ip}. Our findings suggest that an arguably simpler (and less restrictive) token-based system is likely to behave well. It is interesting to expand this work to identify conditions under which  dynamic token systems can be sustainable when the market exhibits more  heterogeneity.

\bibliographystyle{informs2014}
\bibliography{scrips}


\newpage

\section{Appendix}



\subsection{Proofs from Section \ref{subsection:d=1}} \label{appendix:section2}
We start with the proof of the following proposition regarding the case $d=1$, and then discuss about the transition dynamics.

\begin{proposition}
\label{secondlemma}
The token system is not stable for any $n\geq2, P>0$ and $Q>0$ when $d=1$.
\end{proposition}

\proof{Proof of Proposition \ref{secondlemma}.}
Fix an agent $i \in \mathcal{A}$. Then $s_i^t$ is a lazy random walk in one dimension. In particular, we have
\begin{equation}\label{onewalk}
  s_i^{t+1} =  \begin{cases}
      s_i^t +1 & \text{with probability} \sum_{j \in \mathcal{A}\backslash\{i\}} p_j q_i \\
      s_i^t & \text{with probability} \;\; 1 - \sum_{j \in \mathcal{A}\backslash\{i\}} p_j q_i   - \sum_{j \in \mathcal{A}\backslash\{i\}} p_i q_j \\
      s_i^{t}-1 & \text{with probability} \sum_{j \in \mathcal{A}\backslash\{i\}} p_i q_j
   \end{cases}
\end{equation}
for all $t \geq 0$. Note that if  $\sum_{j \in \mathcal{A}\backslash\{i\}} p_j q_i \neq \sum_{j \in \mathcal{A}\backslash\{i\}} p_i q_j$, then $\mathbb{E}[s_i^t]$ diverges as $t \rightarrow \infty$ (i.e., the random walk is transient). Now assume that
\begin{equation}\label{necessary}
\sum_{j \in \mathcal{A}\backslash\{i\}} p_j q_i = \sum_{j \in \mathcal{A}\backslash\{i\}} p_i q_j \;\; \text{for all}\;\; i \in \mathcal{A}.
\end{equation}

\noindent  Then $s_i^t$ is a lazy symmetric random walk. It is a well-known fact that $s_i^t$ will take all the values in $\mathbb{Z}$ with probability $1$ (i.e., the symmetric random walk is recurrent). Moreover, even though $s_i^t$ will be $0$ infinitely often with probability $1$, the expected return time to $0$ is infinity (i.e., the symmetric random walk is null recurrent). Hence, (C1) and (C2) are not satisfied for any $n\geq2, P>0$ and $Q>0$ when $d=1$. \Halmos

As an immediate corollary to Proposition \ref{secondlemma}, consider the random tie-breaking selection rule, under which the policy selects a service provider among $d \geq 1$ available providers uniformly at random. The behavior of the random walk $s_i^t$ follows a similar structure as in \eqref{onewalk}, where the transition probabilities are state-independent;\footnote{The transition probabilities of both random walks are identical if the agents are symmetric.} unlike \eqref{twowalk}, where the transition probabilities are state-dependent and there is a possibility of stabilizing the random walk. Therefore, $s_i^t$ is again either transient or null-recurrent random walk.

\begin{corollary}\label{randomtiebreaking}
Consider a token system $(n,P,Q,d)$ under the random tie-breaking selection rule. Then the token system is not stable for any $n\geq2, P>0$ and $Q>0$ when $d\geq1$.
\end{corollary}

\endproof

\myparagraph{Transition dynamics.}  For future calculations, let us denote the following outcome by $(i,(j,k))$: $i \in \mathcal{A}$ is the service requester, and $\{j,k\} \subset \mathcal{A}$ is the subset of agents who are available to provide service. At each time period $t$, the outcome $(i,(j,k))$ occurs with probability $p_i 2q_j q_k$ if $j \neq k$ and with probability $p_i q_j^2$ if $j=k$. Hence, $i \in \mathcal{A}$ is the service requester at time $t$, $j \in \mathcal{A}$ is the service provider if $s_k^{t} > s_j^{t}$, $k \in \mathcal{A}$ is the service provider if  $s_j^{t} > s_k^{t}$, and one of $j,k \in \mathcal{A}$ is selected as the service provider uniformly at random if $s_j^{t} = s_k^{t}$. Let  $r_{jk}^t$, $j,k \in \mathcal{A}$, $j \neq k$, be the probability that given that only agents $j$ and $k$ are available to provide service at time $t$, agent $j$ is the service provider. Then, we have

\begin{equation*}
  r_{jk}^t =  \begin{cases}
      1 & \text{if} \; s_k^t > s_j^t \\
      \frac{1}{2} & \text{if} \;  s_k^t = s_j^t \\
       0 & \text{if} \;  s_k^t  < s_j^t  \\
   \end{cases}.
\end{equation*}

\noindent Fix $i \in \mathcal{A}$. We have $s_i^{t+1} = s_i^t +1$ if one of the following outcomes occurs at time $t$:

$\bullet$ $(j,(i,i))$, where $j \neq i$, which happens with probability $\sum_{j \in \mathcal{A} \backslash \{i\}} p_j q_i^2$.

$\bullet$ $(j,(i,k))$, where $j,k \neq i$, $s_k^t \geq s_i^t$ and agent $i$ wins the tiebreak if any, which happens with probability $\sum_{j \in \mathcal{A} \backslash \{i\}} \sum_{k \in \mathcal{A} \backslash \{i\}} p_j 2 q_i q_k r_{ik}^t$.

\noindent Similarly, $s_i^{t+1} = s_i^t -1$ if one of the following outcomes occurs at time $t$:

$\bullet$ $(i,(j,j))$, where $j \neq i$, which happens with probability $\sum_{j \in \mathcal{A} \backslash \{i\}} p_i q_j^2$.

$\bullet$ $(i,(j,k))$, where $j,k \neq i$, $j \neq k$, which happens with probability $\sum_{j \in \mathcal{A} \backslash \{i\}} \sum_{k \in \mathcal{A} \backslash \{i,j\}} p_i q_j q_k$.

$\bullet$ $(i,(i,j))$, where $j \neq i$, $s_i^t \geq s_j^t$ and agent $i$ loses the tiebreak if any, which happens with probability $\sum_{j \in \mathcal{A} \backslash \{i\}} p_i 2 q_i q_j r_{ji}^t$.

\noindent Therefore, we have
\begin{equation}\label{twowalk}
s_i^{t+1} =  \begin{cases}
      s_i^t +1 & \text{with probability}  \sum_{j \in \mathcal{A} \backslash \{i\}} p_j q_i^2 + \sum_{j \in \mathcal{A} \backslash \{i\}} \sum_{k \in \mathcal{A} \backslash \{i\}} p_j 2 q_i q_k r_{ik}^t\\
      s_i^{t}-1 & \text{with probability}   \; \sum_{j \in \mathcal{A} \backslash \{i\}} p_i q_j^2 +  \sum_{j \in \mathcal{A} \backslash \{i\}} \sum_{k \in \mathcal{A} \backslash \{i,j\}} p_i q_j q_k + \sum_{j \in \mathcal{A} \backslash \{i\}} p_i 2 q_i q_j r_{ji}^t\\
      s_i^t & \text{otherwise}
   \end{cases}.
\end{equation}

\noindent Note that (\ref{twowalk}) behaves similar to (\ref{onewalk}); the main difference is that the transition probabilities in (\ref{twowalk}) change as $t$ changes because of the $r_{ij}^t$ terms, which are time-dependent. Processes such as (\ref{twowalk}) referred as heterogeneous random walks, where the transition probabilities are state-dependent.

\subsection{Proofs from Section \ref{sec:2player}}\label{appendix:section3}

We start with the proof of Proposition \ref{twoagentsthm}. First, we describe the model here for convenience. Assume that there are 2 agents ($n=2$). We analyze the following discrete time birth-death process with the state space $\mathcal{S} = \left\{ (a,b) :  a \geq 1, a \in \mathbb{N}, b=1,2 \right\} \cup \left\{(0,0)\right\}$, which captures the system: the state $(a,b)$ represents the case in which agent $b$ has $a > 0$ tokens, and $(0,0)$ is the initial state. Denote this birth-death process by $(Z_t: t \geq 0)$. Let $X^t = \max_{i \in \mathcal{A}} s_i^t$ and $Y^t = \min_{i \in \mathcal{A}} s_i^t$.

\begin{repeattheorem}[Proposition \ref{twoagentsthm}.]
The token system with $2$ agents is stable if and only if $q_1^d < p_1$, $q_2^d < p_2$ and $d \geq 2$.
Let $\pi$ be the the steady-state distribution of the Markov chain $(s^t:t\geq0)$. Then for all $M \in \mathbb{Z}_{+}$, we have $$ \mathbb{P}_{\pi}(|s_i^t| \leq M)=1 -  \dfrac{\Bigg( \dfrac{p_2 q_1}{p_1 - q_1^d} \Bigg(\dfrac{p_2 q_1^d}{p_1(1-q_1^d)} \Bigg)^M  + \dfrac{p_1 q_2}{p_2 - q_2^d} \Bigg(\dfrac{p_1 q_2^d}{p_2(1-q_2^d)} \Bigg)^M \Bigg)}{\Bigg( 1 + \dfrac{p_2 q_1}{p_1 -q_1^d}  + \dfrac{p_1 q_2}{p_2-q_2^d}  \Bigg)},$$ for $i=1,2$. Moreover, the expected time between two successive occurrences of the initial state $(0,0)$ is given by $  1 + \dfrac{p_2 q_1}{p_1 -q_1^d}  + \dfrac{p_1 q_2}{p_2-q_2^d}.$

\end{repeattheorem}

\proof{Proof of Proposition \ref{twoagentsthm}.}

The transition probabilities for the process $(Z_t: t \geq 0)$ are as follows:
\begin{align*}
    &\Pr( Z_{t+1} = (1,1) \;|\; Z_{t} = (0,0)) = p_2 \bigg(\sum_{i=1}^d \dfrac{i}{d} {d \choose i} q_1^i q_2^{d-i}\bigg),\\
    &\Pr( Z_{t+1} = (0,0) \;|\; Z_{t} = (1,1)) = p_1 (1 - q_1^d),\\
    &\Pr( Z_{t+1} = (1,2) \;|\; Z_{t} = (0,0)) = p_1 \bigg(\sum_{i=1}^d \dfrac{i}{d} {d \choose i} q_2^i q_1^{d-i}\bigg),\\
    &\Pr( Z_{t+1} = (0,0) \;|\; Z_{t} = (1,2)) = p_2 (1 - q_2^d),\\
    &\Pr( Z_{t+1} = (a+1,1) \;|\; Z_{t} = (a,1)) = p_2 q_1^d \;\; \text{for all} \;\; a \geq 1,\\
    &\Pr( Z_{t+1} = (a+1,2) \;|\; Z_{t} = (a,2)) = p_1 q_2^d \;\; \text{for all} \;\; a \geq 1,\\
    &\Pr( Z_{t+1} = (a-1,1) \;|\; Z_{t} = (a,1)) = p_1 (1 - q_1^d) \;\; \text{for all} \;\; a \geq 2,\\
    &\Pr( Z_{t+1} = (a-1,2) \;|\; Z_{t} = (a,2)) = p_2 (1 - q_2^d) \;\; \text{for all} \;\; a \geq 2.
\end{align*}

\noindent Assume that the steady-state exists, and denote the steady-state vector by $\pi$. The detailed balance equations are:

\begin{equation}\label{messy1}
\pi_{(0,0)} p_2 \bigg(\sum_{i=1}^d \dfrac{i}{d} {d \choose i} q_1^i q_2^{d-i}\bigg) = \pi_{(1,1)} p_1 (1-q_1^d),
\end{equation}

\begin{equation}\label{messy2}
\pi_{(0,0)} p_1 \bigg(\sum_{i=1}^d \dfrac{i}{d} {d \choose i} q_2^i q_1^{d-i}\bigg) = \pi_{(1,2)} p_2 (1-q_2^d),
\end{equation}

\begin{equation}\label{app3:1}
\pi_{(a,1)} p_2 q_1^d = \pi_{(a+1,1)} p_1 (1-q_1^d) \;\;  \text{for all} \;\; a \geq 1,
\end{equation}

\begin{equation}\label{app3:2}
\pi_{(a,2)} p_1 q_2^d = \pi_{(a+1,2)} p_2 (1-q_2^d) \;\;  \text{for all} \;\; a \geq 1,
\end{equation}

\begin{equation}\label{app3:3}
\pi_{(0,0)} + \sum_{b=1}^2 \sum_{a=1}^{\infty} \pi_{(a,b)} = 1.
\end{equation}

\noindent It follows by (\ref{app3:1}) and (\ref{app3:2}) that

\begin{equation}\label{app3:4}
\pi_{(a,1)} = \pi_{(1,1)} \bigg( \dfrac{p_2 q_1^d}{p_1 (1-q_1^d)} \bigg)^{a-1}  \;\;  \text{for all} \;\; a \geq 1,
\end{equation}

\begin{equation}\label{app3:5}
\pi_{(a,2)} = \pi_{(1,2)} \bigg( \dfrac{p_1 q_2^d}{p_2 (1-q_2^d)} \bigg)^{a-1}  \;\;  \text{for all} \;\; a \geq 1.
\end{equation}

\noindent Using (\ref{messy1}), (\ref{messy2}), (\ref{app3:3}), (\ref{app3:4}) and (\ref{app3:5}), first note that since the infinite geometric series in (\ref{app3:3}) must converge, the following are necessary and sufficient conditions for $(Z_t: t \geq 0)$ to have a steady-state:

\begin{equation}\label{blabla1}
 \dfrac{p_2 q_1^d}{p_1 (1-q_1^d)} < 1, \; \text{or, equivalently} \; q_1^d < p_1,
\end{equation}

\begin{equation}\label{blabla2}
 \dfrac{p_1 q_2^d}{p_2 (1-q_2^d)} < 1, \; \text{or, equivalently} \; q_2^d < p_2,
\end{equation}

\noindent where the equivalences in \eqref{blabla1} and \eqref{blabla2} follow from the fact that $p_1+p_2=1$. Using (\ref{app3:3}), (\ref{blabla1}), and (\ref{blabla2}), we have

\begin{equation}\label{app3:6}
\pi_{(0,0)} + \pi_{(1,1)} \bigg( \sum_{a=1}^{\infty} \bigg( \dfrac{p_2 q_1^d}{p_1 (1-q_1^d)} \bigg)^{a-1} \bigg) + \pi_{(1,2)} \bigg( \sum_{a=1}^{\infty} \bigg( \dfrac{p_1 q_2^d}{p_2 (1-q_2^d)} \bigg)^{a-1} \bigg) = 1.
\end{equation}

\noindent Using (\ref{messy1}), (\ref{messy2}) and (\ref{app3:6}), we have

\begin{equation}\label{app3:7}
\pi_{(0,0)} = \bigg( 1 + \dfrac{p_2 (\sum_{i=1}^d \frac{i}{d} {d \choose i} q_1^i q_2^{d-i})}{p_1 - q_1^d}  + \dfrac{p_1 (\sum_{i=1}^d \frac{i}{d} {d \choose i} q_2^i q_1^{d-i})}{p_2-q_2^d}  \bigg)^{-1}.
\end{equation}

\noindent Note that $\sum_{i=1}^d \frac{i}{d} {d \choose i} q_1^i q_2^{d-i} = q_1 \sum_{i=1}^d {d-1 \choose i-1} q_1^{i-1} q_2^{d-i} = q_1 (q_1+q_2)^{d-1} = q_1$. Similarly, $\sum_{i=1}^d \frac{i}{d} {d \choose i} q_2^i q_1^{d-i}= q_2$. Hence, (\ref{app3:7}) simplifies to

\begin{equation}
\pi_{(0,0)} = \bigg( 1 + \dfrac{p_2 q_1}{p_1 -q_1^d}  + \dfrac{p_1 q_2}{p_2-q_2^d}  \bigg)^{-1}.
\end{equation}

\noindent Once we have $\pi_{(0,0)}$ as a function of $p_i$'s and $q_i$'s, we can write all the steady-state probabilities as a function of $p_i$'s and $q_i$'s. Note that in the steady-state, $X^t$ is bounded above by $M \in \mathbb{Z}_{+}$ with probability $1 - \sum_{a=M+1}^{\infty} \pi_{(a,1)} - \sum_{a=M+1}^{\infty} \pi_{(a,2)}$, which is equal to

 \begin{equation}\label{app3:10}
1 -  \dfrac{\bigg( \dfrac{p_2 q_1}{p_1 - q_1^d} \bigg(\dfrac{p_2 q_1^d}{p_1(1-q_1^d)} \bigg)^M  + \dfrac{p_1 q_2}{p_2 - q_2^d} \bigg(\dfrac{p_1 q_2^d}{p_2(1-q_2^d)} \bigg)^M \bigg)}{\bigg( 1 + \dfrac{p_2 q_1}{p_1 -q_1^d}  + \dfrac{p_1 q_2}{p_2-q_2^d}  \bigg)}.
\end{equation}

\noindent Since $X^t + Y^t = 0$ for all $t \geq 0$, for all $M \in \mathbb{Z}_{+}$, $\mathbb{P}_{\pi}(|s_i^t| \leq M)$ is given by (\ref{app3:10}) for $i=1,2$. 

It is a well-known fact that starting from a state $(a,b)$, the expected time of the first occurrence of state $(a,b)$ is $\frac{1}{\pi_{(a,b)}}$. Hence, starting from state $(0,0)$, the expected time of the first occurrence of state $(0,0)$ is

\begin{equation}\label{app3:11}
\frac{1}{\pi_{(0,0)}} = \bigg( 1 + \dfrac{p_2 q_1}{p_1 -q_1^d}  + \dfrac{p_1 q_2}{p_2-q_2^d}  \bigg).
\end{equation}
\Halmos
\endproof

\begin{remark}\label{constantaremark}

$\mathbb{E}(T_0)$ in (C2) follows from (\ref{app3:11}). The constant $a$ in (C1) can be found using (\ref{app3:10}) as follows. Define $$x = \dfrac{p_2 q_1^d}{p_1(1-q_1^d)}, y = \dfrac{p_1 q_2^d}{p_2(1-q_2^d)}, c_1 = \dfrac{\dfrac{p_2 q_1}{p_1 - q_1^d}}{1 + \dfrac{p_2 q_1}{p_1 -q_1^d}  + \dfrac{p_1 q_2}{p_2-q_2^d} }, \; \text{and} \; c_2 =  \dfrac{\dfrac{p_1 q_2}{p_2 - q_2^d}}{1 + \dfrac{p_2 q_1}{p_1 -q_1^d}  + \dfrac{p_1 q_2}{p_2-q_2^d} }.$$ Then (\ref{app3:10}) becomes $1-c_1 x^M - c_2 y^M$. We want to find $0 < a < 1$ such that $1-c_1 x^M - c_2 y^M \geq 1- a^M$, or $a^M \geq c_1 x^M + c_2 y^M$ for all $M \in \mathbb{Z}_{+}$. Note that $c_1, c_2 < 1$. Hence, $a$ can be chosen to be $x+y$ if $x+y < 1$. Consider the case when $x+y > 1$. Let $z = \max\left\{x,y\right\}$. Then $c_1 x^M + c_2 y^M < 2z^M.$ Clearly, there exists $M' \in \mathbb{Z}_{+}$ such that $2z^M < 1$ for all $M \geq M'$. Pick $0 < a_1 < 1$ such that $a_1 > z$ and $a_1^{M'}> 2 z^{M'}$. Consider the inequalities $a^M \geq c_1x^M + c_2 y^M$ for all $M < M'$. Since there are finitely many inequalities, we can pick $0 < a_2 <1$ such that these inequalities are satisfied. Thus in this case, $a$ can be chosen to be $\max\left\{a_1,a_2\right\}$.



\end{remark}


\myparagraph{Intermediate availability.} Suppose that at each time period, we have $d=2$ with probability $\beta$ and $d=1$ with probability $1-\beta$, independently, for some $\beta \in (0,1)$. We can capture this system with the same birth-death process $(Z_t: t\geq0)$  with updated transition probabilities. Following similar calculations as in the proof of Proposition \ref{twoagentsthm}, it follows that the following are necessary and sufficient conditions for the process to have a steady-state:

\begin{equation}
 \dfrac{\beta p_2 q_1^d + (1 - \beta) p_2q_1}{\beta p_1 (1 - q_1^d) + (1 - \beta)p_1q_2}  < 1,
\end{equation}

\begin{equation}
 \dfrac{\beta p_1 q_2^d + (1 - \beta)p_1q_2}{\beta p_2 (1 - q_2^d) + (1 - \beta)p_2q_1} < 1.
\end{equation}

\noindent It also follows that

\begin{equation*}
\pi_{(0,0)} = \bigg( 1 + \dfrac{p_2 q_1}{\beta(p_1 - q_1^d) + (1-\beta)(p_1q_2 - p_2q_1)}  + \dfrac{p_1 q_2}{\beta(p_2 - q_2^d) + (1-\beta)(p_2q_1 - p_1q_2)}  \bigg)^{-1},
\end{equation*}

\noindent and in the steady-state, $X^t$ is bounded above by $M \in \mathbb{Z}_{+}$ with probability

\begin{equation}\label{interremark}
1 - \pi_{(0,0)} (A+B),
\end{equation}

\noindent where

\begin{equation*}
A =  \dfrac{p_2 q_1}{\beta(p_1 - q_1^d) + (1-\beta)(p_1q_2 - p_2q_1)} \bigg(\dfrac{\beta p_2 q_1^d + (1 - \beta) p_2q_1}{\beta p_1 (1 - q_1^d) + (1 - \beta)p_1q_2} \bigg)^M,
\end{equation*}

\begin{equation*}
B =  \dfrac{p_1 q_2}{\beta(p_2 - q_2^d) + (1-\beta)(p_2q_1 - p_1q_2)} \bigg(\dfrac{\beta p_1 q_2^d + (1 - \beta) p_1q_2}{\beta p_2 (1 - q_2^d) + (1 - \beta)p_2q_1} \bigg)^M.
\end{equation*}

\noindent Similarly, since $X^t + Y^t = 0$ for all $t \geq 0$, for all $M \in \mathbb{Z}_{+}$, $\mathbb{P}_{\pi}(|s_i^t|\leq M)$ is given by (\ref{interremark}) for $i=1,2$. When agents are symmetric and $d=2$, (\ref{interremark}) becomes $1-\frac{2}{2+\beta} (\frac{2-\beta}{2+\beta})^M$. As Figure \ref{fig:inter} shows, even for small values of $\beta$, the token distribution is fairly balanced with high probability.
\begin{figure}[h]
\begin{center}
\includegraphics[scale=0.45]{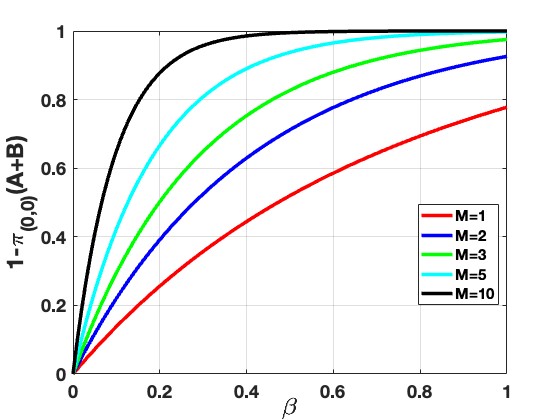}
\caption{(\ref{interremark}) as a function of $\beta$ for several values of $M$ when the agents are symmetric and $d=2$.}
\label{fig:inter}
\end{center}
\end{figure}

\subsection{Proof of Theorem \ref{thm2}}\label{appendixthm1proof}

\proof{Proof of Theorem \ref{thm2}.} 

Denote by $k^t$ the agent chosen to request service at time $t$, and denote by $I^t$ the set of agents chosen to be available to provide service at time $t$. Let $|I^t|$ denote the size of $I^t$, and note that $|I^t|$ takes values in  $\{1,2,\ldots,d\}$.  Let $j^{t} \in I^t$ be the agent chosen to provide service, i.e., an agent chosen uniformly from the agents in $i \in I^t$ that minimizes $s_i^t$. Hence $s_{j^{t}}^t = \min_{i \in I^t}s_i^t$, so that in particular, the service provider $j^{t}$ is an agent with the minimum number of tokens among the available agents in $I^t$.  

Let $V^t = \sum_{i=1}^n (s_i^t)^2$,
 and let 
    $$v^t := \mathbb{E}[V^t] = \sum_{i=1}^n \mathbb{E}[(s_i^t)^2].
    $$ 
    Note that by symmetry, we have $v^t = n\mathbb{E}[(s_1^t)^2].$  Let $E_t$ be the event $\{k^t = j^t\}$, i.e., the event that the service provider and requester are the same agent. Let $E_t^c$ be the event $\{k^t \neq j^t\}$.  Since $k^t$ is uniformly distributed and $k^t$ and $j^t$ are independent, the probability of $E_t$ is $1/n$.
    Note that conditioned on $E_{t}$, it holds that $s^{t+1}=s^t$. Then we have 
\begin{align*}
    v^{t+1} 
    &= \frac{1}{n}\mathbb{E}[V^{t+1}|E_t] + \frac{n-1}{n}\mathbb{E}[V^{t+1}|E_t^c]\\
    &= \frac{1}{n}\mathbb{E}[V^t] + \frac{n-1}{n}\mathbb{E}[V^{t+1}|E_t^c]\\
    &= \frac{1}{n}v^t + \frac{n-1}{n}\mathbb{E}[V^{t+1}|E_t^c].
    \end{align*}
 Now,
    \begin{align*}
    \mathbb{E}[V^{t+1}|E_t^c]
    &=
    \sum_{i \in \mathcal{A}} \mathbb{E}[(s_i^{t+1})^2|E_t^c]\\
    &=  \mathbb{E}[(s_{k^t}^{t+1})^2|E_t^c]+\mathbb{E}[(s_{j^{t}}^{t+1})^2|E_t^c]+ \sum_{i \in I^t \setminus \{j^t\}} \mathbb{E}[(s_i^{t+1})^2|E_t^c] + \sum_{i \in \mathcal{A} \setminus I^t \setminus \{k^t\}} \mathbb{E}[(s_i^{t+1})^2|E_t^c] \\
    &= \mathbb{E}[(s_{k^t}^{t}-1)^2 +(s_{j^{t}}^{t} + 1)^2 |E_t^c]+ \sum_{i \in I^t \setminus \{j^t\}} \mathbb{E}[(s_i^{t})^2] + \sum_{i \in \mathcal{A} \setminus I^t \setminus \{k^t\}} \mathbb{E}[(s_i^{t})^2]  \\
    &= - 2\mathbb{E}[s_{k^t}^{t}|E_t^c] + 1 + 2\mathbb{E}[s_{j^{t}}^{t}|E_t^c]+1 + \sum_{i \in \mathcal{A} } \mathbb{E}[(s_i^{t})^2].
\end{align*}


\noindent Since $\sum_{i\in \mathcal{A}} s_i^t = 0$ for all $t \geq 0$, we must have that $\mathbb{E}[s_{k^t}^{t}|E_t^c] \geq 0$. Since $k^t$ and $j^t$ are independent, we have $\mathbb{E}[s_{j^{t}}^{t}|E_t^c]=\mathbb{E}[s_{j^{t}}^{t}]$. Hence,
\begin{align*}
    \mathbb{E}[V^{t+1}|E_t^c] \leq v^t + 2\mathbb{E}[s_{j^t}^t]+2,
\end{align*}
and so
\begin{align*}
    v^{t+1} \leq v^t + \frac{2(n-1)}{n}\mathbb{E}[s_{j^t}^t]+\frac{2(n-1)}{n}.
\end{align*}



Without loss of generality, assume that ${I}^t= \{1,2,\ldots, |I^t|\}$.
Now with probability $n^{-(d-1)}$, we have that $|I^t|=1$, in which case $I^t =\{j^t\}=\{1\}$. Hence $\mathbb{E}[s_{j^t}^t\;|\; |I^t|=1]=0$. With probability $1-n^{-(d-1)}$, we have that $|I_t|\geq 2$, in which case $s_{j^t}^t=\min\{s_1^t,s_2^t, \ldots, s_{|I^t|}^t \} \leq \min\{s_1^t,s_2^t \}$. Using the fact that $\min\{a,b\} = \frac{1}{2}(a+b - |a-b|)$ for all $a,b \in \mathbb{R}$, we get $2\mathbb{E}[\min\{s_1^t,s_2^t \}]= -\mathbb{E}[|s_1^t - s_2^t |]$, and we get 
\begin{align*}
    \mathbb{E}[s_{j^t}^t \;|\; |I^t| \geq 2] \leq -\frac{1}{2}\mathbb{E}[|s_1^t-s_2^t|].
\end{align*}
Thus,
\begin{align*}
    \mathbb{E}[s_{j^t}^t \;|\; |I^t| \geq 2] \leq -\frac{1}{2}(1-n^{-(d-1)})\mathbb{E}[|s_1^t-s_2^t|],
\end{align*}
and we get
\begin{align*}
    v^{t+1} 
    &\leq v^t - \frac{n-1}{n}  (1-n^{-(d-1)})\mathbb{E}[|s_1^t - s_2^t|] + 2.
\end{align*}


Now we make use of the following claim. For real random variables $Y,Z_1,\ldots, Z_n$, we have 

\begin{equation}\label{thm2claim}
    \mathbb{E}\bigg[\bigg| Y - \frac{1}{n} \sum_{i=1}^n Z_i\bigg|\bigg] \leq \frac{1}{n} \sum_{i=1}^n \mathbb{E}[|Y - Z_i|],
\end{equation}
where the claim follows immediately from convexity and Jensen's inequality. Again by the symmetry of the problem, we have

\begin{equation*}
    \mathbb{E}[|s_1^t - s_2^t |] = \frac{1}{n-1} \sum_{i=2}^n \mathbb{E}[|s_1^t - s_i^t|]. 
\end{equation*}
Therefore by \eqref{thm2claim} and using the fact that $\sum_{i=2}^n s_i^t = -s_1^t$ for all $t \geq 0$, we have 

\begin{equation*}
    \mathbb{E}[|s_1^t - s_2^t|] \geq  \mathbb{E}\bigg[\bigg| s_1^t - \frac{1}{n-1}\sum_{i=2}^n s_i^t \bigg|\bigg] = \mathbb{E}\bigg[\bigg| s_1^t + \frac{s_1^t}{n-1} \bigg|\bigg] = \frac{n}{n-1} \mathbb{E}[|s_1^t|],
\end{equation*}
Therefore, we have

\begin{equation*}
    v^{t+1}  \leq v_t - (1-n^{-(d-1)})\mathbb{E}[|s_1^t|] + 2,
\end{equation*}
and via recursion, we get
\begin{equation*}
    v^{t} \leq 2t - (1-n^{-(d-1)})\sum_{u=0}^{t-1} \mathbb{E}[|s_1^u|].
\end{equation*}
Since $v^t \geq 0$, we have
\begin{equation}\label{thm2eq}
\frac{1}{t} \sum_{u=0}^{t-1} \mathbb{E}[|s_1^u|] \leq \frac{2}{1-n^{-(d-1)}} \leq 4,
\end{equation}
where the second inequality follows from the fact that for $d,n \geq 2$, the denominator $1-n^{-(d-1)}$ is at least $1/2$.

Denote the distribution of $s^t$ by $\mu^t \in \Delta(\mathbb{Z}^n)$, and let $\nu^t = \frac{1}{t} \sum_{u=0}^{t-1} \mu_t$. Let $Y^t$ be a random variable with distribution $\nu^t$ for all $t \geq 1$, and denote by $Y_i^t$ the $i$'th index of $Y^t$ for all $i \in \mathcal{A}$. Then per \eqref{thm2eq}, we have

\begin{equation}\label{eq:upperbound}
    \mathbb{E}[|Y_i^t|] = \frac{1}{t} \sum_{u=0}^{t-1} \mathbb{E}[|s_i^u|] \leq 4,
\end{equation}
 for all $i \in \mathcal{A}$. 

Suppose towards a contradiction that the Markov chain has no stationary probability measure. Then for any finite subset $E \subset \mathbb{Z}^n$, it holds that $\lim_{t \rightarrow \infty}\mathbb{P}(s_i^t \in E) =0$ by standard arguments about Markov chains. In particular, $\lim_{u \rightarrow \infty} \mathbb{E}[|s_i^u|] = \infty$, which contradicts  \eqref{eq:upperbound}. Thus, we have stability.

Since the Markov chain has a stationary distribution, and since it is irreducible and aperiodic, the distribution of $s_i^t$ will converge to the stationary distribution. It follows from \eqref{eq:upperbound} that $\lim_{t \rightarrow \infty} \mathbb{E}[|s_i^t|] \leq 4$, and so $\mathbb{E}[|s_i^t|] \leq 5$ for all $t$ large enough. Hence,  it follows from Markov's inequality that for all $t$ large enough and for all $i \in \mathcal{A}$, we have
\begin{align*}
    \mathbb{P}(|s_i^t| \geq M)\leq \frac{5}{M}. 
\end{align*} 
\hfill\Halmos

\subsection{Proof of Theorem~\ref{expbound}}\label{proofexpbound}


Recall that for $\lambda \in \mathbb{R}$, $Z_\lambda^t = \sum_{i=1}^n e^{\lambda s_i^t}$ and $E_\lambda^t = \mathbb{E}[Z_\lambda^t]$. As a first step, we show that for $\lambda$ sufficiently close to zero, $E_\lambda^t$ has a negative drift and thus a finite limit.

\begin{lemma}[Negative drift]\label{lem:neg-drift}
    For any $\lambda \in \mathbb{R}$, 
    \begin{equation}\label{eqn:lem-neg-drift-main}
        E_\lambda^{t+1} \le C_\lambda + (1-\gamma_\lambda) E_\lambda^t,
    \end{equation}
    where $C_\lambda := \frac{|1-e^\lambda|}{n}$ and $\gamma_\lambda := \frac{|1-e^\lambda|}{n^2} + \frac{2 - 2\cosh\lambda}{n}$. In particular, when $|\lambda| \lesssim 1/n$, we have $\gamma_\lambda > 0$ and thus, $\limsup_{t\to\infty} E_\lambda^t \le C_\lambda / \gamma_\lambda < \infty$.
\end{lemma}

\proof{Proof of Lemma~\ref{lem:neg-drift}.}
    Throughout the analysis, we will assume without loss of generality that $s_1^t \le s_2^t \le \cdots \le s_n^t$. As in the proof of Theorem~\ref{thm2}, we denote by $k^t$ and $j^t$ the agent chosen to request service and the one chosen to provide service at time $t$, respectively.
    
    By the law of total probability, we have
    \begin{equation}\label{eqn:proof_lem_bdd_growth_ltp}
        E_\lambda^{t+1} = \mathbb E\big[ \mathbb E[Z_\lambda^t|s^t]\big].
    \end{equation}
    
    First, consider $\lambda > 0$.
    Note that, through some straightforward albeit tedious arithmetic manipulation, we have
    \begin{align}
        \mathbb E[Z_\lambda^{t+1} | s^t]
        &= Z_\lambda^t + \mathbb E\Big[\big(e^{\lambda (s_{j^t}^t+1)} - e^{\lambda s_{j^t}^t}+e^{\lambda (s_{k^t}^t-1)}-e^{\lambda s_{k^t}^t}\big) \cdot \mathbbm{1}_{j^t\ne k^t}\Big|s^t\Big] \nonumber \\
        &= Z_\lambda^t + (e^\lambda - 1)\mathbb E\Big[\big(e^{\lambda s_{j^t}^t}-e^{\lambda (s_{k^t}^t-1)}\big) \cdot \mathbbm{1}_{j^t\ne k^t}\Big|s^t\Big] \\ 
        &= Z_\lambda^t + (e^\lambda - 1)\mathbb E\Big[\big(e^{\lambda s_{j^t}^t}-e^{\lambda (s_{k^t}^t-1)}\big) \Big|s^t\Big] - \nonumber \\
        &\qquad (e^\lambda - 1)\mathbb E\Big[\big(e^{\lambda s_{j^t}^t}-e^{\lambda (s_{k^t}^t-1)}\big) \cdot \mathbbm{1}_{j^t = k^t}\Big|s^t\Big] \nonumber \\
        &= Z_\lambda^t + (e^\lambda - 1)\mathbb E[e^{\lambda s_{j^t}^t} |s^t] - (1-e^{-\lambda})\mathbb E[e^{\lambda s_{k^t}^t}|s^t] - \nonumber \\
        &\qquad (e^\lambda - 1)\mathbb E\Big[e^{\lambda s_{j^t}^t} \cdot \mathbbm{1}_{j^t = k^t}\Big|s^t\Big] + (1-e^{-\lambda})\mathbb E\Big[e^{\lambda s_{j^t}^t} \cdot \mathbbm{1}_{j^t = k^t}\Big|s^t\Big] \nonumber \\
        &= Z_\lambda^t + (e^\lambda - 1)\mathbb E[e^{\lambda s_{j^t}^t} |s^t] - \frac{1-e^{-\lambda}}{n}Z_\lambda^t + \nonumber \\
        &\qquad \frac{2 - e^\lambda - e^{-\lambda}}{n}\mathbb E\Big[e^{\lambda s_{j^t}^t}\Big|s^t\Big] \nonumber \\
        &\le \Big(1 - \frac{1-e^{-\lambda}}{n}\Big)Z_\lambda^t + (e^\lambda - 1)\mathbb E[e^{\lambda s_{j^t}^t} |s^t], \label{eqn:proof_lem_drift_Etplus1_bound}
    \end{align}
    where in the second last equality we use the fact that $k^t$ is uniform and independent of $j^t$, and in the last step we drop the last term as it is non-positive.

    \begin{claim}\label{claim:diff_E}
        When $d\ge 2$, for $\lambda > 0$,
        \begin{equation}
            \mathbb E[e^{\lambda s_{j^t}^t} |s^t] - \mathbb E[e^{\lambda s_{k^t}^t} |s^t] \le \frac{1}{n^2}\sum_{1\le i < j\le n} \left( e^{\lambda s_i^t} - e^{\lambda s_j^t} \right) \le \frac{1}{n} - \frac{1}{n^2}Z_\lambda^t \le 0.
        \end{equation}
        Similarly, for $\lambda < 0$,
        \begin{equation}
            \mathbb E[e^{\lambda s_{j^t}^t} |s^t] - \mathbb E[e^{\lambda s_{k^t}^t} |s^t] \ge \frac{1}{n^2}\sum_{i=1}^n \left( e^{\lambda s_i^t} - e^{\lambda s_n^t} \right) \ge \frac{1}{n^2}Z_\lambda^t - \frac{1}{n} \ge 0.
        \end{equation}
    \end{claim}

    As a result of the claim, for $\lambda > 0$,
    \begin{align}
        \mathbb E[Z_\lambda^{t+1}|s^t] &\le \Big(1 - \frac{1-e^{-\lambda}}{n}\Big)Z_\lambda^t + \nonumber \\
        &\qquad (e^\lambda - 1) \Big( \mathbb E[e^{\lambda s_{k^t}^t} |s^t] + \frac{1}{n} - \frac{1}{n^2}Z_\lambda^t \Big) \nonumber \\
        &= \Big(1 - \frac{e^\lambda - 1}{n}\big(e^{-\lambda} - 1 + \frac{1}{n}\big)\Big)Z_\lambda^t + \frac{e^\lambda - 1}{n}. \nonumber
    \end{align}
    By the law of total probability, 
    \begin{equation}\label{eqn:proof_lem_neg_drift_rec_1}
        E_\lambda^{t+1} \le \frac{e^\lambda - 1}{n} + \Big(1 - \frac{e^\lambda - 1}{n}\big(e^{-\lambda} - 1 + \frac{1}{n}\big)\Big) E_\lambda^t.
    \end{equation}
    For $\lambda$ sufficiently small (namely, $\lambda \lesssim 1/n$), the coefficient $\gamma_\lambda := \frac{e^\lambda - 1}{n}\big(e^{-\lambda} - 1 + \frac{1}{n}\big) > 0$.

    Similarly, for $\lambda < 0$,
    \begin{align}
        \mathbb E[Z_\lambda^{t+1}|s^t] &\le \Big(1 - \frac{1-e^{-\lambda}}{n}\Big)Z_\lambda^t + \nonumber \\
        &\qquad (e^\lambda - 1) \Big( \mathbb E[e^{\lambda s_{k^t}^t} |s^t] + \frac{1}{n^2}Z_\lambda^t -  \frac{1}{n}\Big) \nonumber \\
        &= \Big(1 - \frac{1 - e^\lambda}{n}\big(1 - e^{-\lambda} + \frac{1}{n}\big)\Big)Z_\lambda^t + \frac{1 - e^\lambda}{n}. \nonumber
    \end{align}
    Hence,
    \begin{equation}\label{eqn:proof_lem_neg_drift_rec_2}
        E_\lambda^{t+1} \le \frac{1 - e^\lambda}{n} + \Big(1 - \frac{1  -e^\lambda}{n}\big(1 - e^{-\lambda} + \frac{1}{n}\big)\Big) E_\lambda^t.
    \end{equation}
    Again, the recurrent coefficient is less than 1 when $|\lambda| \lesssim 1/n$, finishing our proof.
\Halmos \endproof

\proof{Proof of Claim~\ref{claim:diff_E}.}
    Since
    \begin{equation}
        s^t_{j_t} = \min_{1\le \ell \le d} s^t_{I^t_\ell} \quad\text{ with } I^t_1,\ldots, I^t_d \overset{i.i.d.}{\sim} \text{Uniform}(\{1,\ldots,n\}),
    \end{equation}
    the distribution of $e^{\lambda s^t_{j_t}}$ is dominated by (resp. dominates) the case where $d = 2$ for $\lambda > 0$ (resp. $\lambda < 0$). It is sufficient to consider $d = 2$. Again, we assumed without loss of generality that the indices are ordered such that at time $t$ we have $s^t_1 \le s^t_2 \le \cdots \le s^t_n$. The difference in the two expectations can be evaluated as
    \begin{align}
        \mathbb E[e^{\lambda s^t_{j_t}} |s^t] - \mathbb E[e^{\lambda s^t_{k_t}} |s^t] &= \frac{1}{n^2}\sum_{1\le i < j\le n} \left( e^{\lambda s^t_i} - e^{\lambda s^t_j} \right) \nonumber \\
        &\le \frac{1}{n^2}\sum_{i=1}^n \left(e^{\lambda s^t_1} - e^{\lambda s^t_i}\right) \nonumber \\
        &= \frac{1}{n} - \frac{1}{n^2}Z_\lambda^t.
    \end{align}
    The case for $\lambda < 0$ is analogous.
\Halmos\endproof

\begin{lemma}\label{lem:chernoff-refined}
For $\lambda \in (-\ln 1.5, \ln 2)$,
    $\limsup_{t\to\infty} E_\lambda^t < \infty$.
\end{lemma}

\proof{Proof of Lemma~\ref{lem:chernoff-refined}.}
    Define $\overline \lambda$ as
    \begin{equation}
        \overline\lambda := \sup\{\lambda: \limsup_{t\to\infty} E_\lambda^t < \infty\},
    \end{equation}
    and similarly define $\underline\lambda$ as
    \begin{equation}
        \underline\lambda := \inf\{\lambda: \limsup_{t\to\infty} E_\lambda^t < \infty\}.
    \end{equation}
    Lemma~\ref{lem:neg-drift} implies that $\underline\lambda < 0 < \overline\lambda$ for any $n\in\mathbb Z_+$. Note that for any $n,t\in\mathbb Z_+$, the function $\lambda\mapsto E_\lambda^t$ is convex. 
    As a result, for any $\lambda\in(\underline\lambda(n),\overline\lambda(n))$, we have $\limsup_{t\to\infty} E_\lambda^t < \infty$. It now suffices to show that $\overline\lambda \ge \log 2$ and $\underline\lambda \le -\log 1.5$.
    
    For the moment, let us assume that $n$ is even for convenience, and write $n=2L$ with $L \ge 2$; generalizing this is straightforward.
    
    Consider $\lambda_0\in(\underline\lambda(n), 0)$. We will investigate for what values of $\lambda > 0$ we have a finite $\limsup_{t\to\infty} E_\lambda^t$.
    
    Again, we consider the recursive relationship between $E_\lambda^t$ and $E_\lambda^{t+1}$, and assume without loss of generality that $s_1^t\le\cdots\le s_n^t$. Let $\epsilon \in (0,1)$ be a constant to be specified, and denote by $\mathcal{E}_t$ the event that $e^{\lambda s_L^t}< \epsilon Z_\lambda^t/n$. We then decompose $E_\lambda(t+1)$ into the combined contribution from two terms as
    \begin{align}
        E_\lambda(t+1) &= \mathbb E[Z_\lambda(t+1)] \nonumber \\
        &= \mathbb E\big[Z_\lambda(t+1)|\mathcal{E}_t\big] + \mathbb E\big[Z_\lambda(t+1)|\mathcal{E}_t^c\big]. \label{Eqn:proof_refined_chernoff_lambda_gte_zero_two_cases}
    \end{align}
    
    For the first term in \eqref{Eqn:proof_refined_chernoff_lambda_gte_zero_two_cases}, note that whenever $e^{\lambda s_L^t}< \epsilon Z_\lambda^t/n$, we have
    \begin{align}
        \sum_{1\le i < j\le n} e^{\lambda s_i^t} - e^{\lambda s_j^t} &\le \sum_{i=1}^L \sum_{j=L+1}^n e^{\lambda s_i^t} - e^{\lambda s_j^t} \\
        &\le L(n-L) \cdot Z_\lambda^t \cdot \left(\frac{\epsilon}{n} -  \frac{1}{n-L}\left(1-\frac{\epsilon L}{n}\right)\right)\\
        &= -(1-\epsilon) L Z_\lambda^t.
    \end{align}
    Thus, in this case, Claim~\ref{claim:diff_E} can be refined into
    \begin{equation}
        \mathbb E\big[e^{\lambda s_{j^t}^t} - e^{\lambda s_{k^t}^t} \big| \mathcal{E}_t, s^t\big] \le \frac{1}{n^2}\sum_{1\le i < j\le n} e^{\lambda s_i^t} - e^{\lambda s_j^t} \le -\frac{1-\epsilon}{n^2} L Z_\lambda^t.
    \end{equation}
    Similar to \eqref{eqn:proof_lem_drift_Etplus1_bound}, we now have
    \begin{align}
        \mathbb E\Big[Z_\lambda^{t+1} \Big| \mathcal{E}_t, s^t\Big] 
        &\le \Big(1 - \frac{1-e^{-\lambda}}{n}\Big)Z_\lambda^t +  (e^\lambda - 1) \mathbb E\Big[e^{\lambda s_{j^t}^t} \Big| \mathcal{E}_t, s^t\Big] \nonumber \\
        &\le \Big(1 - \frac{1-e^{-\lambda}}{n}\Big)Z_\lambda^t + 
        (e^\lambda - 1) \Big( \mathbb E\Big[e^{\lambda s_{k^t}^t} \Big| \mathcal{E}_t, s^t\Big] - \frac{1-\epsilon}{n^2}LZ_\lambda^t \Big) \nonumber \\
        &= \bigg(1 - \frac{e^\lambda - 1}{n}\Big(e^{-\lambda} - 1 + \frac{(1-\epsilon)L}{n}\Big)\bigg)Z_\lambda^t. \nonumber
    \end{align}
    Thus,
    \begin{equation}\label{Eqn:proof_refined_chernoff_lambda_gte_zero_part_1}
        \mathbb E\big[Z_\lambda^{t+1}|\mathcal{E}_t \big] \le \bigg(1 - \frac{e^\lambda - 1}{n}\Big(e^{-\lambda} - 1 + \frac{(1-\epsilon)L}{n}\Big)\bigg) E_\lambda^t.
    \end{equation}
    For the choice of $L=n/2$, the coefficient is strictly less than 1 for any $\lambda < \log 2$ granted that we choose $\epsilon$ sufficiently small.
    
    Now we consider the second term in \eqref{Eqn:proof_refined_chernoff_lambda_gte_zero_two_cases} under the event $\mathcal{E}_t^c$. 
    By a natural extension of Lemma~\ref{lem:neg-drift}, we have
    \begin{equation*}
        \mathbb E\big[Z_\lambda^{t+1}|\mathcal{E}_t^c \big] \le \mathbb E\Big[\big(C_\lambda + (1-\gamma_\lambda) Z_\lambda^t\big) \cdot\mathbbm{1}_{e^{\lambda s_L^t} \ge \epsilon Z_\lambda^t/n} \Big]
        \le C_\lambda + (1-\gamma_\lambda) \mathbb E\Big[Z_\lambda^t \cdot\mathbbm{1}_{e^{\lambda s_L^t} \ge \epsilon Z_\lambda^t/n}\Big].
    \end{equation*}
    The expectation term on the right-hand side can be further expanded as
    \begin{multline*}
        \mathbb E\Big[Z_\lambda^t \cdot\mathbbm{1}_{e^{\lambda s_L^t} \ge \epsilon Z_\lambda^t/n}\Big] = \int_0^\infty \mathbb P\big(Z_\lambda^t \cdot\mathbbm{1}_{e^{\lambda s_L^t} \ge \epsilon Z_\lambda^t/n} \ge x\big) dx \le n + \int_n^\infty \mathbb P\Big(Z_\lambda^t \ge x, e^{\lambda s_L^t} \ge \frac{\epsilon}{n} Z_\lambda^t\Big) dx.
    \end{multline*}
    When $e^{\lambda s_L^t} \ge \epsilon Z_\lambda^t/n$, we must have
    \begin{equation}
        \prod_{i=1}^{L-1} e^{-\lambda s_i^t} = \prod_{i=L}^n e^{\lambda s_i^t} \ge \bigg(\frac{\epsilon}{n} Z_\lambda^t\bigg)^{n-L+1},
    \end{equation}
    or equivalently for $\lambda_- \in(\underline\lambda, 0)$,
    \begin{equation}
        \prod_{i=1}^{L-1} e^{\lambda_- s_i^t} \ge \bigg(\frac{\epsilon}{n} Z_\lambda^t\bigg)^{\frac{-\lambda_-(n-L+1)}{\lambda}}.
    \end{equation}
    By the AM-GM inequality,
    \begin{equation}
        Z_{\lambda_-}(t) \ge \sum_{i=1}^{L-1} e^{\lambda_- s_i^t} \ge (L-1)\bigg(\frac{\epsilon}{n} Z_\lambda^t\bigg)^{\frac{-\lambda_-(n-L+1)}{\lambda (L-1)}}.
    \end{equation}
    Putting things together, we obtain
    \begin{align}
        \mathbb E\big[Z_\lambda(t+1);\mathcal{E}_t^c \big] 
        &\le C_\lambda + (1-\gamma_\lambda)  \left(n + \int_n^\infty \mathbb P\Bigg(Z_{\lambda_-}(t) \ge (L-1)\bigg(\frac{\epsilon}{n} x\bigg)^{\frac{-\lambda_-(n-L+1)}{\lambda (L-1)}}\Bigg) dx \right) \nonumber \\
        &\le C_\lambda + (1-\gamma_\lambda) \bigg(n + \int_n^\infty \frac{E_{\lambda_-}(t)}{L-1} \cdot \Big(\frac{\epsilon}{n}\Big)^{\frac{\lambda_-(n-L+1)}{\lambda (L-1)}} \cdot x^{\frac{\lambda_-(n-L+1)}{\lambda (L-1)}} dx \bigg) \nonumber \\
        &= C_\lambda + (1-\gamma_\lambda) \bigg(n + \frac{E_{\lambda_-}(t)}{L-1} \cdot \Big(\frac{\epsilon}{n}\Big)^{\frac{\lambda_-(n-L+1)}{\lambda (L-1)}} \int_n^\infty  x^{\frac{\lambda_-(n-L+1)}{\lambda (L-1)}} dx \bigg), \label{Eqn:proof_refined_chernoff_lambda_gte_zero_part_2}
    \end{align}
    where the last inequality uses Markov's inequality. The entire coefficient in front of the integral is uniformly bounded in $t$ since $\limsup_{\tau\to\infty} E_{\lambda_-}^{\tau} < \infty$; whenever $\lambda < -\lambda_-\frac{n-L+1}{L-1}$, the integral converges.
    Combining \eqref{Eqn:proof_refined_chernoff_lambda_gte_zero_part_1} with \eqref{Eqn:proof_refined_chernoff_lambda_gte_zero_part_2}, we have
    \begin{equation*}
        E_\lambda^{t+1} \le (1-\gamma') E_\lambda^t + C'
    \end{equation*}
    for some positive constants $\gamma'$ and $C'$, whenever $\lambda < \min\{-\frac{n-L+1}{L-1}\lambda_-, \log 2\}$. Thus, $\limsup_{t\to\infty} E_\lambda^t < \infty$. 
    Sending $\lambda_-\to\underline\lambda$ gives
    \begin{equation}\label{eqn:proof_refine_chernoff_final_part_1}
        \overline\lambda \ge \min\Big\{-\frac{n-L+1}{L-1}\underline\lambda, \log 2\Big\}.
    \end{equation}
    
    With an analogous argument by considering $\lambda_+\in(0,\overline\lambda(n))$ and $\lambda < 0$, we obtain
    \begin{equation*}
        E_\lambda(t+1) \le (1-\gamma'') E_t + C''
    \end{equation*}
    for some $\gamma'',C'' > 0$,
    granted that $\lambda > \max\{-\frac{n-L+1}{L-1}\lambda_+, \log\frac{2}{3}\}$. Sending $\lambda_+\to\overline\lambda$ gives
    \begin{equation}\label{eqn:proof_refine_chernoff_final_part_2}
        \underline\lambda \le \max\Big\{-\frac{n-L+1}{L-1}\overline\lambda, \log\frac{2}{3}\Big\}.
    \end{equation}
    
    \eqref{eqn:proof_refine_chernoff_final_part_1} and \eqref{eqn:proof_refine_chernoff_final_part_2} together form a system of two piecewise linear inequalities. Solving it gives
    \begin{equation*}
        \overline\lambda \ge \log 2 \qquad\text{ and }\qquad \underline\lambda \le \log\frac{2}{3}
    \end{equation*}
    as desired.
\Halmos \endproof

\proof{Proof of Theorem~\ref{expbound}.}
The theorem follows directly from Lemmas~\ref{lemma1:lyapunov} and \ref{lem:chernoff-refined}.
\Halmos\endproof

\subsection{Proofs from Section \ref{sec:general}} \label{appendix:section4}

\proof{Proof of Lemma \ref{loww}.}
We first prove that $\pi_0 > \frac{1}{2}$. Note that (\ref{eq:det2}) can be written as
\begin{equation}\label{lemma3rec}
\sum_{M \geq 1} ( \pi_M + \pi_{-M+1} -1 ) = 0.
\end{equation}

We claim that if $\pi_1 + \pi_0 = \pi_0^2 + \pi_0 - 1 < 0$, which implies $\pi_0 \leq \frac{1}{2}$, then all the terms in the summation (\ref{lemma3rec}) is negative, which is a contradiction. We have $\pi_M + \pi_{-M+1} -1 = \pi_0^{2^M} + \pi_0^{2^{-M+1}} -1$ by (\ref{eq:rec2}). Let $f(M) = 1 - \pi_0^{2^M} - \pi_0^{2^{-M+1}}$. Assume to the contrary that $\pi_0 \leq \frac{1}{2}$. Then $f(1) > 0$, and clearly we have $\lim_{M \rightarrow \infty} f(M) = 0$. We will show that $f(M) \geq f(M+1)$ for all $M \in \mathbb{Z}_{+}$. The derivative of $f(M)$ with respect to $M$ is
\begin{equation}\label{derivativeoffunc}
\dfrac{d f(M)}{d M} = -2^M \cdot \log(2) \cdot \pi_0^{2^M} \cdot \log(\pi_0) + 2^{-M+1} \cdot \log(2) \cdot \pi_0^{2^{-M+1}} \cdot \log(\pi_0),
\end{equation}
where $\log$ is the natural logarithm. Since $\pi_0 \leq \frac{1}{2}$ by assumption, we have $\log(2) \cdot \log(\pi_0) < 0$. Thus, (\ref{derivativeoffunc}) has the same sign with
\begin{equation}\label{derivativeoffunc2}
2^M \cdot \pi_0^{2^M}  - 2^{-M+1} \cdot \pi_0^{2^{-M+1}} = \dfrac{2^{2M-1} \cdot \pi_0^{2^M}  - \pi_0^{2^{-M+1}}}{2^{M-1}}.
\end{equation}
Since $2^M > 2M -1$ for all $M \geq 1$ and $2\pi_0 \leq 1$, we have

\begin{equation}\label{derivativeoffunc3}
2^{2M-1} \cdot \pi_0^{2^M}  - \pi_0^{2^{-M+1}} = (2 \pi_0)^{2M-1} \cdot \pi_0^{2^M - 2M +1} -  \pi_0^{2^{-M+1}} \leq \pi_0^{2^M - 2M +1} -  \pi_0^{2^{-M+1}} < 0
\end{equation}
for all $M > 1$, where the last inequality follows from the fact that $2^M - 2M +1 > 0$, $-M+1 < 0$, and $0 < \pi_0 < 1$. \

Now we prove that $\pi_0 < \frac{3}{4}$. Assume to the contrary that $\pi_0 \geq \frac{3}{4}$. Then clearly we have
\begin{equation}\label{lemma41}
\sum_{i \geq 1}^\infty \pi_i = \sum_{i \geq 1}^{\infty} \pi_0^{2^i} \geq 0.8.
\end{equation}
Now we want to find an upper bound for
\begin{equation}\label{lemma42}
\sum_{i \leq 0}^\infty (1-\pi_i) = \sum_{i \geq 0}^\infty (1-\pi_0^{2^{-i}}).
\end{equation}
Since $\frac{1-\pi_{0}^{2^{-i}}}{1-\pi_{0}^{2^{-i-1}}} = 1+ \pi_0^{2^{-i-1}}$ is an increasing function for $i \geq 0$, we can upper bound (\ref{lemma42}) by the following geometric series
\begin{equation}\label{lemma43}
(1-\pi_0) + (1- \pi_0^{2^{-1}}) + (1- \pi_0^{2^{-2}}) (1 + r + r^2 + ...) \leq 0.6,
\end{equation}
where $r = \frac{1-\pi_{0}^{2^{-2}}}{1-\pi_{0}^{2^{-3}}}$. But, (\ref{lemma41}) and  (\ref{lemma43}) contradict to (\ref{eq:det2}). \Halmos
\endproof

\begin{lemma}[Lipschitz condition]\label{appendixthmlipschitz}
The finite model satisfies the Lipschitz condition in $L_1$-distance.
\end{lemma}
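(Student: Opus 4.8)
The plan is to verify the Lipschitz condition for the drift function $F$ of the finite model directly from its explicit coordinate formula \eqref{eq:diffeq}. Recall that the $i$-th component of $F$ at a state $\vec z = (\ldots, z_{-1}, z_0, z_1, \ldots)$ equals
\begin{equation*}
F_i(\vec z) = (z_{i-1}^d - z_i^d) - (z_i - z_{i+1}).
\end{equation*}
Fix two states $\vec z$ and $\vec w$ in the state space (so both have entries in $[0,1]$, are non-increasing in $i$, and agree with the boundary values $1$ at $-\infty$ and $0$ at $+\infty$). First I would bound $|F_i(\vec z) - F_i(\vec w)|$ coordinatewise by the triangle inequality:
\begin{equation*}
|F_i(\vec z) - F_i(\vec w)| \le |z_{i-1}^d - w_{i-1}^d| + |z_i^d - w_i^d| + |z_i - w_i| + |z_{i+1} - w_{i+1}|.
\end{equation*}

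The key elementary fact is that $x \mapsto x^d$ is Lipschitz with constant $d$ on $[0,1]$, since $|x^d - y^d| = |x-y|\,|x^{d-1} + x^{d-2}y + \cdots + y^{d-1}| \le d|x-y|$ whenever $x,y \in [0,1]$. Applying this to the two $d$-th power terms gives
\begin{equation*}
|F_i(\vec z) - F_i(\vec w)| \le d\,|z_{i-1} - w_{i-1}| + d\,|z_i - w_i| + |z_i - w_i| + |z_{i+1} - w_{i+1}|.
\end{equation*}
Now I would sum over all $i \in \mathbb{Z}$ to pass to the $L_1$-distance. Each difference $|z_j - w_j|$ appears in the bound for $F_{j+1}$ (with coefficient $d$ from the power term at index $i-1=j$), for $F_j$ (with coefficient $d+1$), and for $F_{j-1}$ (with coefficient $1$), so each term $|z_j - w_j|$ is charged a total weight of $d + (d+1) + 1 = 2d + 2$. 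Hence
\begin{equation*}
\|F(\vec z) - F(\vec w)\|_1 = \sum_{i \in \mathbb{Z}} |F_i(\vec z) - F_i(\vec w)| \le (2d+2) \sum_{j \in \mathbb{Z}} |z_j - w_j| = (2+2d)\,\|\vec z - \vec w\|_1,
\end{equation*}
which is exactly the Lipschitz condition of Theorem~\ref{kurtztheoremm} with $M = 2 + 2d$.

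The only genuine subtlety — and what I expect to be the main point needing care rather than a true obstacle — is justifying that the rearrangement of the sum is legitimate, i.e.\ that $\|\vec z - \vec w\|_1$ is finite on the relevant domain so that the interchange of summation is valid. This holds because the states reached by the process (and the states on a bounded neighborhood of the deterministic path) differ from the fixed initial configuration $(\ldots,1,1,1,0,0,\ldots)$ in only finitely many coordinates by an amount summing to a bounded quantity (the total displacement of tokens is controlled), so $\vec z - \vec w$ is $\ell^1$; for the purpose of invoking Kurtz's theorem it suffices to have the Lipschitz bound on such a neighborhood $K$. I would state this domain restriction explicitly and then the termwise estimate above carries through verbatim, completing the proof.
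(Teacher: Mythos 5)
Your proposal is correct and follows essentially the same route as the paper's proof: a coordinatewise triangle inequality, the bound $|x^d-y^d|\le d|x-y|$ on $[0,1]$ via the factorization of $a^d-b^d$, and a reindexing of the sums to obtain the constant $2+2d$. Your extra remark on the finiteness of $\|\vec z-\vec w\|_1$ and the legitimacy of the rearrangement is a reasonable point of care that the paper passes over silently, but it does not change the argument.
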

\proof{Proof of Lemma \ref{appendixthmlipschitz}.}
Let $x = (x_i)_{i \in \mathbb{Z}}$ and $y = (y_i)_{i \in \mathbb{Z}}$ be two states of the finite model. By (\ref{eq:diffeq}), we have
\begin{align*}
  |F(x) - F(y)| &= \sum_{i=-\infty}^{\infty} | (x_{i-1}^d - x_i^d) - (x_i - x_{i+1}) - (y_{i-1}^d-y_i^d) + (y_i - y_{i+1})| \\
  &\leq  2\sum_{i=-\infty}^\infty |x_i^d - y_i^d| + 2\sum_{i=-\infty}^\infty |x_i - y_i|   \\
  &\leq (2+2d) \sum_{i=-\infty}^\infty |x_i - y_i|\\
  &=(2+2d)|x-y|,
\end{align*}
where in the first inequality we used the triangle inequality, and in the second inequality we used the expansion $(a-b)^n = (a-b)(a^{n-1} + a^{n-2}b + ... + ab^{n-2} + b^{n-1})$ and the fact that $0 \leq x_i,y_i \leq 1$ for all $i \in \mathbb{Z}$.
\Halmos
\endproof

\subsection{A simple asymmetric case}\label{appendix:asymmetric}

In this section, we further investigate whether the token system under the minimum token selection rule behaves well with asymmetric agents. In the context of kidney exchange, it is natural to ask whether large (or small) hospitals will have some advantage, or cause the system to be unstable. The system with asymmetric agents can also be  modeled as a density dependent Markov chain. However, note that even in the symmetric case, there was no closed form expression for the equilibrium point. Thus, instead of finding a closed form expression for the equilibrium point, the analogous differential equations for the asymmetric case can be used for numerical studies.

We discuss a very simple asymmetric setting with two types of agents referred to as $A$ and $B$, and let $d=2$ for simplicity. Assume that there are $n$ agents where $n$ is an even integer, and there are two types of agents: type $A$ and type $B$. Agents within the same type have the same service requesting and providing rate. Assume that there are $n/2$ many type $A$ agents and $n/2$ many type $B$ agents for simplicity. Define the service requesting distribution by $P=(p_i)_{i \in \mathcal{A}}$, where $p_i = p_A$ if agent $i$ is type $A$ and $p_i = p_B$ if agent $i$ is type $B$. This gives $\frac{n}{2}(p_A + p_B) = 1$. Similarly, define the service availability distribution by $Q=(q_i)_{i \in \mathcal{A}}$, where $q_i = q_A$ if agent $i$ is type $A$ and $q_i = q_B$ if agent $i$ is type $B$. This gives $\frac{n}{2}(q_A + q_B) = 1.$




Similar to the finite model, in order to fit the system with asymmetric agents to the definition of a density dependent Markov chain, we can assume that each agent has an exponential clock with mean $np_i$. Ticking of agent $i$'s clock corresponds to a service request by $i$. Note that the service requesting and service availability probabilities change as $n$ changes, but the ratio between the probabilities do not change. Thus, let us assume that $p_B = \alpha p_A$ and $q_B = \beta q_A$ for some constants $\alpha$ and $\beta$ which are independent of $n$.

Let $z_i^A(t)$ be the fraction of type $A$ agents with at least $i$ tokens at time $t$ among the type A agents and $z_i^B(t)$ be the fraction of type $B$ agents with at least $i$ tokens at time $t$ among the type B agents. Similar to the finite model, we will represent the state of the system by $\vec{z}(t) = (\vec{z^A}(t), \vec{z^B}(t))$ where $\vec{z^A}(t) = (..., z_{-1}^A(t), z_0^A(t), z_1^A(t), ...)$ and $\vec{z^B}(t) = (..., z_{-1}^B(t), z_0^B(t), z_1^B(t), ...)$. We drop the time index $t$ when the meaning is clear. Note that the initial state of the system is $\vec{z}(0) = (\vec{z^A}(0), \vec{z^B}(0))$, where $z_i^A(0) = z_i^B(0) = 1$ for all $i \leq 0$, and $z_i^A(0) = z_i^B(0) = 0$ for all $i \geq 1$.



We will denote the set of possible transitions from $\vec{k} = \frac{n \vec{z}}{2}$ by $L = \left\{ e_{ij}^A, e_{ij}^B, (e_i,-e_j), (-e_i, e_j) : i,j \in \mathbb{Z}, i \neq j \right\}$, where $e_{ij}^A = (e_{ij} , \vec{0})$ and $e_{ij}^B = (\vec{0}, e_{ij})$. Here, $e_{ij}$ is an infinite dimensional vector of all zeros except the $i$'th index (which corresponds to the index of $z_i^A(t)$ or $z_i^B(t)$) is $-1$ and the $j$'th index (which corresponds to the index of $z_j^A(t)$ or $z_j^B(t)$) is $1$, $e_i$ is an infinite dimensional vector of all zeros except the $i$'th index (which corresponds to the index of $z_i^A(t)$ or $z_i^B(t)$) is $-1$. For example, $e_{ij}^A$ corresponds to the transition where the service requester is a type $A$ agent with $i$ many tokens and the service provider is a type $A$ agent with $j-1$ many tokens. Using these notations, we can compute the following probabilities:

$\bullet$ The probability that the service requester is a type $A$ agent with $i$ many tokens is $\frac{1}{1+\alpha}(z_i^A - z_{i+1}^A)$. Denote this probability by $c_i^A$.

$\bullet$ The probability that the service requester is a type $B$ agent with $i$ many tokens is $\frac{\alpha}{1+\alpha}(z_i^B - z_{i+1}^B)$. Denote this probability by $c_i^B$.

$\bullet$ The probability that the service provider is a type $A$ agent with $j-1$ many tokens is $(\frac{1}{1+\beta})^2 ((z_{j-1}^A)^2 - (z_{j}^A)^2) + 2(\frac{1}{1+\beta})(\frac{\beta}{1+\beta}) (z_{j-1}^A- z_{j}^A)z_j^B + 2(\frac{1}{1+\beta})(\frac{\beta}{1+\beta}) (z_{j-1}^A- z_{j}^A) (z_{j-1}^B- z_{j}^B) \frac{1}{2}$. Denote this probability by $d_{j-1}^A$.

$\bullet$ The probability that the service provider is a type $B$ agent with $j-1$ many tokens is $(\frac{\beta}{1+\beta})^2 ((z_{j-1}^B)^2 - (z_{j}^B)^2) + 2(\frac{1}{1+\beta})(\frac{\beta}{1+\beta}) (z_{j-1}^B- z_{j}^B)z_j^A + 2(\frac{1}{1+\beta})(\frac{\beta}{1+\beta}) (z_{j-1}^A- z_{j}^A) (z_{j-1}^B- z_{j}^B) \frac{1}{2}$. Denote this probability by $d_{j-1}^B$.

Hence, we have $\beta_{e_{ij}^A} (\vec{z}) = c_i^A d_{j-1}^A, \beta_{e_{ij}^B} (\vec{z}) = c_i^B d_{j-1}^B,  \beta_{(e_i,-e_j)} (\vec{z}) = c_j^B d_{i-1}^A$ and $ \beta_{(-e_i,e_j)} (\vec{z}) = c_i^A d_{j-1}^B$. Clearly the condition (\ref{eq:jump}) is satisfied since the jump rate is bounded in the system, and given the constants $\alpha$ and $\beta$, the Lipschitz condition can be easily checked. Using Kurtz's theorem, the differential equations, which characterizes the infinite system with asymmetric agents can be found as

\begin{equation}\label{as1}
\dfrac{dz_i^A}{dt} = -c_i^A + d_{i-1}^A \;\; \text{for all} \;\;  i \in \mathbb{Z},
\end{equation}
\begin{equation}\label{as2}
\dfrac{dz_i^B}{dt} = -c_i^B + d_{i-1}^B \;\; \text{for all} \;\;  i \in \mathbb{Z}.
\end{equation}

Since agents start with 0 tokens and exchange one token at each transition of the system, the expected number of tokens agents have is $0$, and it can be translated as follows:
\begin{equation}\label{as3}
\sum_{i \geq 1}^\infty z_i^A-  \sum_{i \leq 0}^\infty (1 - z_i^A) + \sum_{i \geq 1}^\infty z_i^B-  \sum_{i \leq 0}^\infty (1 - z_i^B)  = 0.
\end{equation}



As we mentioned earlier, we are unable to find a closed form expression for the equilibrium point, but instead one may perform numerical studies using (\ref{as1}), (\ref{as2}) and (\ref{as3}).

We ran simulations in order to conduct comparative statics on having relatively more agents of one type and on the dominance of one type over the other. In both simulations, we fix the number of agents to $n=10$, and let the system run until $t=2 \cdot 10^7$. Let $f$ be the number of type $A$ agents and $p_{A,f,M}$ be the probability of in the long run, we have $-M \leq s_i^t \leq M$, where $M \in \mathbb{Z}^{+}$ and $i$ is a type $A$ agent. Similarly define $p_{B,f,M}$. In the first simulation, we fix $\alpha$ and $\beta$, and vary $f$ (see Tables \ref{sim1:t1} and \ref{sim1:t2}).
In the second simulation, we fix $f$ and vary $p_A$ (see Tables \ref{sim2:t1} and \ref{sim2:t2}).  Observe that in all simulations, having two types of agents does not create any instability as agents' number of tokens remain between $-4$ and $4$, with high probability.


  \begin{table}
        \begin{minipage}{0.5\textwidth}
            \centering
            \tabcolsep=0.11cm
            \begin{tabular}{|l|c|c|c|c|}
\hline
 $M / f$ & 2 & 4 & 6 & 8 \\
\hline
 1 & 0.6486 & 0.6476 & 0.6475 & 0.6469 \\ \cline{1-5}
 2 & 0.8787 & 0.8753 & 0.8734 & 0.8706 \\\cline{1-5}
 3 & 0.9523 & 0.9510 & 0.9515 & 0.9506 \\\cline{1-5}
 4 & 0.9769 & 0.9767 & 0.9777 & 0.9780 \\\cline{1-5}
 \hline
 \end{tabular}

\caption{The values for $p_{A,f,M}$ where\\ $p_B = 10p_A$ and  $q_B = 10q_A$.}
\label{sim1:t1}
        \end{minipage}
        \hfillx
        \begin{minipage}{0.5\textwidth}
            \centering
            \tabcolsep=0.11cm
           \begin{tabular}{|l|c|c|c|c|}
\hline
 $M / f$ & 2 & 4 & 6 & 8 \\
\hline
 1 & 0.6434 & 0.6410 & 0.6398 & 0.6405 \\\cline{1-5}
 2 & 0.8684 & 0.8645 & 0.8609 & 0.8587 \\\cline{1-5}
 3 & 0.9521 & 0.9512 & 0.9502 & 0.9495 \\\cline{1-5}
 4 & 0.9803 & 0.9809 & 0.9816 & 0.9824 \\\cline{1-5}
 \hline
 \end{tabular}

\caption{The values for $p_{B,f,M}$ where $p_B = 10p_A$ and  $q_B = 10q_A$.}
\label{sim1:t2}
        \end{minipage}
    \end{table}

 \begin{table}[h]
        \begin{minipage}{0.5\textwidth}
            \centering
           \tabcolsep=0.05cm
\begin{tabular}{|l|c|c|c|c|c|}
\hline
 $M / p_A$ & 0.02 & 0.04 & 0.06 & 0.08 & 0.1 \\
\hline
 1 & 0.6478 & 0.6469 & 0.6463 & 0.6453 & 0.6447\\\cline{1-6}
 2 & 0.8752 & 0.8742 & 0.8735 & 0.8726 & 0.8714\\\cline{1-6}
 3 & 0.9518 & 0.9520 & 0.9522 & 0.9527 & 0.9527\\\cline{1-6}
 4 & 0.9774 & 0.9779 & 0.9786 & 0.9794 & 0.9797\\\cline{1-6}
 \hline
 \end{tabular}

\caption{The values for $p_{A,5,M}$.}
\label{sim2:t1}
        \end{minipage}
        \hfillx
        \begin{minipage}{0.5\textwidth}
            \centering
\tabcolsep=0.05cm
          \begin{tabular}{|l|c|c|c|c|c|}
\hline
 $M / p_A$ & 0.02 & 0.04 & 0.06 & 0.08 & 0.1 \\
\hline
 1 & 0.6410 & 0.6418 & 0.6431 & 0.6437 & 0.6452\\\cline{1-6}
 2 & 0.8632 & 0.8657 & 0.8681 & 0.8697 & 0.8713\\\cline{1-6}
 3 & 0.9508 & 0.9516 & 0.9524 & 0.9525 & 0.9527\\\cline{1-6}
 4 & 0.9811 & 0.9810 & 0.9807 & 0.9801 & 0.9798\\\cline{1-6}
 \hline
 \end{tabular}
\caption{The values for $p_{B,5,M}$.}
\label{sim2:t2}
        \end{minipage}
    \end{table}

\end{document}